\newtheorem{definition}{Definition}
\newtheorem{example}{Example}
\newtheorem{theorem}{Theorem}
\newtheorem{proposition}{Proposition}
\newtheorem{lemma}{Lemma}
\newtheorem{corollary}{Corollary}
\newtheorem{remark}{Remark}
\tikzset{
    module/.style={%
        draw, rounded corners,
        minimum width=#1,
        minimum height=7mm,
        font=\sffamily
        },
    module/.default=2cm,
    >=LaTeX
}
\tikzset{
   n/.style= {circle,fill,inner sep=1.5pt,node distance=2cm}
  ,acc/.style={circle,draw,inner sep=3pt,node distance=2cm}
  ,phantom/.style={circle},
  ,arr/.style={->, >=stealth, semithick, shorten <= 3pt, shorten >= 3pt}
}
\newcommand{\LTLf}{LTL$_f$\xspace}
\newcommand{\PPLTL}{PPLTL\xspace}
\newcommand{\LTL}{LTL\xspace}
\newcommand{\LTLfp}{LTL$_f$+\xspace}
\newcommand{\PPLTLp}{PPLTL+\xspace}
\newcommand{\DFA}{{DFA}\xspace}
\newcommand{\NFA}{{NFA}\xspace}
\newcommand{\lydiasyft}{$\mathsf{LydiaSyft}$\xspace}
\newcommand{\tool}{$\mathsf{LydiaSyft+}$\xspace}
\title{Emerson-Lei and Manna-Pnueli Games for \LTLfp and \PPLTLp Synthesis
}
\author{%
Daniel Hausmann$^1$\and
Shufang Zhu$^1$\and
Gianmarco Parretti$^2$\and
Christoph Weinhuber$^3$\\
Giuseppe {De Giacomo}$^{2,3}$\and
Nir Piterman$^4$\\
\affiliations
$^1$University of Liverpool, UK\qquad
$^2$Sapienza University of Rome, Italy\qquad
$^3$University of Oxford, UK\\
$^4$University of Gothenburg and Chalmers University of Technology, Sweden\\
\emails
\{shufang.zhu, hausmann\}@liverpool.ac.uk,
parretti@diag.uniroma1.it,\\
\{christoph.weinhuber, giuseppe.degiacomo\}@cs.ox.ac.uk,
piterman@chalmers.se
}
\begin{document}
\maketitle
\begin{abstract}
Recently, the Manna-Pnueli Hierarchy has been used to define the temporal logics \LTLfp and \PPLTLp, which allow to use finite-trace \LTLf/\PPLTL techniques in infinite-trace settings while achieving the expressiveness of full \LTL. 
In this paper, we present the first actual solvers for reactive synthesis in these logics. These are based on games on graphs that leverage DFA-based techniques from \LTLf/\PPLTL to construct the game arena. We start with a symbolic solver based on Emerson-Lei games, which reduces lower-class properties~(guarantee, safety) to higher ones (recurrence, persistence) before solving the game.
We then introduce Manna-Pnueli games, which natively embed Manna-Pnueli objectives into the arena. These games are solved by composing solutions to a DAG of simpler Emerson-Lei games, resulting in a provably more efficient approach.
We implemented the solvers and practically evaluated their performance on a range of representative formulas. The results show that Manna-Pnueli games often offer significant advantages, though not universally, indicating that combining both approaches could further enhance practical performance.
\end{abstract}

\section{Introduction}\label{sec:intro}

This paper is about devising actual solvers for reactive synthesis in \LTLf{+} and PPLTL+, which can be seen as a Manna-Pnueli normal form for Linear Temporal Logic (\LTL) based respectively on its finite trace variant \LTLf and on Pure Past LTL  (PPLTL).

Reactive synthesis deals with synthesizing programs (aka strategies) from temporal specifications, for systems
(e.g. agents, processes, protocols, controllers, robots) that interact with their environments during their execution \cite{PnueliR89,finkbeiner2016synthesis,EhlersLTV17}. The most common specification language is possibly Linear Temporal Logic (LTL)~\cite{Pnueli77}.
Reactive synthesis shares foundational techniques with model checking, grounded in the interplay between logic, automata, and games \cite{fijalkow2023games}. For LTL, synthesis normally proceeds by: (1) specifying the desired behavior with controllable and uncontrollable variables; (2) translating the specification into an equivalent automaton over infinite words; (3) determinizing the automaton—unlike in model checking—to define a game between system and environment; and (4) solving the game, typically with a parity objective, to derive a strategy satisfying the original specification.

The symbolic techniques for model checking based on Boolean encodings to compactly represent game arenas and compute fixpoints can also be leveraged in reactive synthesis.
Nevertheless, differently from model checking, LTL synthesis has struggled to achieve comparable efficiency, primarily due to the inherent complexity of Step (3): the determinization of nondeterministic Büchi automata~(NBA), a process known to be computationally challenging~\cite{Vardi07,AlthoffTW06}.


In AI, reactive synthesis is closely related to strong planning for temporally extended goals in fully observable nondeterministic domains \cite{Cimatti03,DBLP:journals/amai/BacchusK98,BacchusK00,CalvaneseGV02,BaierFM07,GereviniHLSD09,DR-IJCAI18,CamachoBM19}.
Plans are typically assumed to terminate, and
this has led to a focus on logics over finite traces, rather than infinite ones, with \LTLf, the finite-trace variant of LTL, being a common choice \cite{GPSS80,BaierM06,DegVa13,DegVa15}.
In fact, \LTLf synthesis \cite{DegVa15} is, along with the GR(1) fragment of LTL \cite{PiPS06}, one of the two major success stories in reactive synthesis to date.

The steps of the \LTLf synthesis algorithm closely mirror Steps (1)–(4) outlined for LTL, potentially incurring similar asymptotic blowups. Specifically, Step (2) yields a nondeterministic finite automaton (NFA) that can be exponentially larger than the input formula, and Step (3) determinizes it into a DFA, which can be exponentially larger than the NFA. However, in practice, these worst-case blowups rarely occur for \LTLf:  Step (2) is shared with model checking, where it performs well; Step (3) benefits from the fact that NFA determinization is rarely problematic in practice -- in fact, it is often observed that the resulting DFA is actually smaller than the original NFA \cite{TaVa05,ArmoniKTVZ06,RozierV12,TabakovRV12,zhu21}.

Moreover, tools like MONA \cite{KlarlundMS02}, used by \LTLf synthesizers to extract  automata, return DFAs in semi-symbolic form, i.e., transitions are represented symbolically. This is a very important aspect, because one other significant barrier that we may underestimate in LTL and \LTLf synthesis is the alphabet explosion problem. That is to say, once there are just 20 propositions, which is not a very large number, the alphabet already has size 1 million, making explicit treatment of the alphabet infeasible. Symbolic
methods circumvent this problem (to some extent).

These nice characteristics of \LTLf have enabled the community to build a first fully symbolic solver for \LTLf synthesis \cite{ZTLPV17}. Since then, a series of papers have improved symbolic technology for \LTLf synthesis significantly \cite{bansal2020hybrid,DF2021,KankariyaB24,lydiasyft}. Furthermore, the \LTLf technology has been extended to several other cases, from safety and reachability \cite{ZhuTLPV17,BansalGSLVZ22,AminofGSFRZ25} all the way to GR(1)~\cite{DeGiacomoSTVZ22}.

These desirable properties are also exhibited by Pure-Past LTL (PPLTL) \cite{LichtensteinPZ85,DeGiacomoSFR20}, which has recently emerged as an even simpler alternative to \LTLf, while maintaining the same expressive power \cite{BonassiGFFGS23icaps,BonassiGFFGS23ecai,BonassiDGS24ecai}.

The question is, can we use this technology to do LTL synthesis? Recently  \cite{LTLfplus} gave affirmative answer to this question by exploiting Manna and Pnueli’s normal form \cite{DBLP:conf/podc/MannaP89}. 
This normal form is based on specifying finite trace properties, e.g. expressed in \LTLf (or PPLTL) and then requiring that the property holds for \emph{some} prefixes of infinite traces (\emph{guarantee} properties), for \emph{all} prefixes (\emph{safety} properties), for \emph{infinitely many} prefixes (\emph{recurrence} properties), or for \emph{all but finitely many} prefixes (\emph{persistence} properties). Any LTL formula can be expressed as a Boolean combination of these four classes. 

How can we take advantage of Manna and Pnueli’s normal form? An answer is, first of all, in building the arena. 
For each finite trace property, we can build the DFA, using \LTLf technology \cite{LTLfplus}. Then, depending on Manna and Pnueli’s class, we choose the accepting condition to (guarantee) visit a final state once, (safety) never leave the set of final states, (recurrence) visit some final state infinitely often, (persistence) stay out of the set of final states finitely many times. All these component automata run in parallel, so we can take their product. In fact, we can represent the product symbolically in a straightforward way.
To solve these temporal conditions, we have several options. With a simple (polynomial) manipulation, we can transform  guarantee and safety properties, and corresponding automata, into recurrence properties, see \cite{LTLfplus}. What we get is an Emerson-Lei game where the only temporal conditions are boolean combinations of recurrence and persistence \cite{EmersonL87}. Such a game can be solved symbolically through a fixpoint algorithm based on compact semantic representations of Emerson-Lei objectives, called Zielonka Trees \cite{DBLP:conf/fossacs/HausmannLP24}.
Note that Emerson-Lei games can be reduced (with an exponential multiplicative factor) to parity games, for which there are well-developed solvers. However, while in \cite{LTLfplus} it was shown that no further determinization would be needed to generate the parity automaton, we do need to handle possible permutations (latest appearance records) which destroy the symbolic representation. Instead, the Emerson-Lei approach preserves the symbolic representation, which is used for the fixpoint computation directly.
We stress that  both these solutions are worst-case optimal and solve  \LTLf{+} synthesis in 2EXPTIME  and PPLTL+ synthesis in EXPTIME, respectively (these problems are 2EXPTIME-complete and EXPTIME-complete) \cite{LTLfplus}.  However, there is a sharp difference in practical terms among them, since only the Emerson-Lei based solution preserves the symbolic structure coming from the finite-trace properties.

Can we improve on the Emerson-Lei approach? In particular, can we avoid reducing guarantee and safety, which do not require nested fixpoints, to recurrence, which indeed requires nesting and complicates the Zielonka Tree? In this paper we answer affirmatively. We introduce a new kind of games called \emph{Manna-Pnueli games} that handle the combination of conditions guarantee, safety, recurrence and persistence directly. In particular, for such games, we give a symbolic fixpoint solution analogous to that of Emerson-Lei games in \cite{DBLP:conf/fossacs/HausmannLP24}, but exploiting the simplicity of guarantee and safety conditions to simplify the fixpoint computation.

The contributions of this paper are as follows:
\begin{compactitem}
\item 
First, we present a symbolic synthesizer based on Emerson-Lei games, which reduces lower-class properties (guarantee, safety) to higher-class ones (recurrence, persistence) before solving the game.

\item 
We then introduce Manna-Pnueli games, which natively deal with Manna-Pnueli objectives on the game arena.

\item 
Next, we present a symbolic synthesizer based on these new Manna-Pnueli games.

\item 
We show that Manna-Pnueli games can be solved by composing solutions to a DAG of simpler Emerson-Lei games.

\item 
We prove that this compositional approach to solving Manna-Pnueli games is asymptotically more efficient than the naïve reduction to Emerson-Lei games.

\item 
Finally, we implement both solvers using state-of-the-art symbolic technology and evaluate their performance on a range of representative formulas.
\end{compactitem}
The results show that Manna-Pnueli games often offer significant advantages, though not universally, suggesting that combining both approaches could further enhance practical performance.


\section{Preliminaries}\label{sec:prelim}

\paragraph{\LTLfp and \PPLTLp.}\label{par:logics}

We briefly recall the logics \LTLfp and \PPLTLp following \cite{LTLfplus}.
These logics allow to express Boolean combinations of guarantee, safety, recurrence, and persistence properties over
\emph{finite traces}; the underlying finite trace properties are specified either in \LTLf~\cite{DegVa13}, that is, in \LTL evaluated over finite traces,
or in \PPLTL, that is, in pure past LTL~\cite{DeGiacomoSFR20}. Note that in this paper we use $\mathsf{X}$ for weak next and $\mathsf{X[!]}$ for strong next.

Formulas of \LTLfp (resp. \PPLTLp) over a countable set $AP$ of propositions are constructed by the grammar
\begin{align*}
\Psi, \Psi' ::= \forall\Phi \mid \exists\Phi \mid \forall\exists\Phi \mid \exists\forall\Phi \mid \Psi \lor \Psi' \mid \Psi \land \Psi' \mid \neg \Psi
\end{align*}
where the $\Phi$ are finite trace \LTLf (resp. \PPLTL) formulas.
We use $[\Phi]\subseteq (2^{AP})^*$ to
denote the set of finite traces over $2^{AP}$ that satisfy a finite trace formula $\Phi$. 
Given a set $T\subseteq (2^{AP})^*$ of finite traces, we let $\exists T$ ($\forall T$) denote the set of infinite traces $\tau\in (2^{AP})^\omega$
such that at least one finite prefix of $\tau$ is (all finite prefixes of $\tau$ are) contained in $T$;
similarly, we let $\forall\exists T$ ($\exists\forall T$) denote the set of infinite traces for which infinitely many (all but finitely many) 
prefixes are contained in $T$.
Then we evaluate \LTLfp/\PPLTLp formulas $\Psi$ over infinite traces using the extension $[\Psi]\subseteq (2^{AP})^\omega$
defined inductively by
$[\Psi\lor\Psi'] = [\Psi]\cup [\Psi']$, 
$[\Psi\land\Psi'] = [\Psi]\cap [\Psi']$,
$[\neg\Psi] = (2^{AP})^\omega\setminus[\Psi]$, $[\mathbb{Q}\Phi] = \mathbb{Q} [\Phi]$
where $\mathbb{Q}\in\{\exists,\forall,\forall\exists,\exists\forall\}$.
The logics \LTLf, \PPLTLp, and \LTL define the same infinite-trace properties.

\smallskip
\noindent\textbf{Automata on finite traces.}\label{par:automata}
A transition system $T=(\Sigma,Q,I,\delta)$ consists
of a finite alphabet $\Sigma$, a finite set $Q$ of states, a set $I\subseteq Q$
of initial states, and a transition relation $\delta\subseteq Q\times \Sigma\times Q$.
For $q\in Q$ and $a\in\Sigma$, we define $\delta(q,a)=\{q'\in Q\mid (q,a,q')\in \delta\}$.
A transition system is \emph{deterministic} if $|I|=1$ and $|\delta(q,a)|=1$ for all
$q\in Q$ and $a\in\Sigma$, and
\emph{nondeterministic} otherwise; for deterministic transition systems, we write $\delta(q,a)=q'$ where
$q'\in Q$ is \emph{the} state such that $q'\in\delta(q,a)$ and denote the initial state by $\iota$.
A \emph{finite automaton} $\mathcal{A}=(T,F)$ is a transition system together with a set $F\subseteq Q$ of accepting states.
If $T$ is deterministic, then $\mathcal{A}$ is a deterministic finite automaton (\DFA), otherwise it is a nondeterministic finite automaton (\NFA).
A \emph{run} of an automaton on a word $w\in \Sigma^*$ is a path through $T$ starting at an initial state such that the sequence of transition labels of the path is $w$;
a finite run is accepting if it ends in an accepting state.
An automaton \emph{accepts} the language $L(\mathcal{A})$ consisting of all finite words for which
there is an accepting run of $\mathcal{A}$.

Finite trace \LTLf and \PPLTL formulas over $AP$ can be turned into equivalent finite automata (with alphabet $2^{AP}$), accepting
exactly the traces that satisfy the formulas.
For each \LTLf formula $\varphi$, there is an equivalent \NFA of size $2^{\mathcal{O}(|\varphi|)}$ and an equivalent \DFA of size $2^{2^{\mathcal{O}(|\varphi|)}}$ \cite{DegVa15}.
For each \PPLTL formula $\varphi$, there is an equivalent \DFA of size $2^{\mathcal{O}(|\varphi|)}$ \cite{DeGiacomoSFR20}.

\smallskip
\noindent\textbf{Infinite-duration games on finite graphs.}\label{par:games}
A \emph{game arena} is a finite directed graph $A=(V,E\subseteq V\times V)$ such that $V$ is partitioned into the sets $V_s$ and $V_e$ of game nodes controlled by
the \emph{system player} and the \emph{environment player}, respectively.  
Define $E(v)=\{v'\in V\mid (v,v')\in E\}$ for $v\in V$ and assume that $E(v)\neq \emptyset$ for all $v\in V$.
A \emph{play} is a path in $A$; let $\mathsf{plays}(A)$ denote the set of infinite plays over $A$.
A \emph{strategy} for the system player is a function $\sigma:V^*\cdot V_s \to V$ that assigns a single game node $\sigma(v_0 v_1 \ldots v_n)\in E(v_n)$ to any finite
play $v_0 v_1 \ldots v_n\in V^*\cdot V_s$ that ends in a game node owned by the system player ($v_n\in V_s$). A play $v_0 v_1 \ldots$
is \emph{compatible} with a strategy if for every $i$ such that
$v_i\in V_s$, $v_{i+1}=\sigma(v_0 v_1 \ldots v_i)$.
An \emph{objective} on an arena $A$ is a set $O\subseteq \mathsf{plays}(A)$ of plays; then a play is \emph{winning}
for the system player if it is contained in $O$.
Notions of strategies and winning plays for the environment player
are defined dually.
A strategy $\sigma$ wins a node $v\in V$ for a player
if all plays that are compatible with $\sigma$ and start at $v$ are winning for that player.
\emph{Solving} a game amounts to computing the set of game nodes won by the two players, together with their witnessing strategies.

\smallskip
\noindent\textbf{Reactive synthesis and games.}\label{par:syntgames}
In the context of reactive synthesis, we assume that the set $AP$ of atomic propositions is
partitioned into sets $X$ and $Y$, denoting \emph{system} and \emph{environment actions}, respectively.
A \emph{(synthesis) strategy} is a function $\sigma:(2^Y)^*\to 2^X$, and an \emph{outcome}
of such a strategy is an infinite word $(x_0\cup y_0)(x_1\cup y_1)\ldots\in (2^{AP})^\omega$
such that $x_{i+1}=\sigma(y_0 y_1\ldots y_i)$ for all $i\geq 0$.
Thus, synthesis strategies encode the behavior of \emph{transducers}.
\begin{definition}~\cite{LTLfplus}
The \emph{synthesis problem} for \LTLfp (resp. \PPLTLp) asks for a
given formula $\Psi$ whether there is a strategy $\sigma$ such that every 
outcome of $\sigma$ satisfies $\Psi$, and if so, to return such a strategy.
\end{definition}

Then a deterministic transition system $T=(2^{\mathsf{AP}},Q,\iota,\delta)$ induces
a game arena $A_T=(Q\cup Q\times 2^X\cup Q\times 2^X\times 2^Y,E)$ with the system player owning nodes $q\in Q$
and the environment player owning all other nodes; the moves are defined by putting $E(q)=\{q\}\times 2^X$,
$E(q,x)=\{(q,x)\}\times 2^Y$, and  $E(q,x,y)=\{\delta(q,x\cup y)\}$.
Plays $q_0(q_0,x_0)(q_0,x_0,y_0)q_1(q_1,x_1)(q_1,x_1,y_1)\ldots$ over $A_T$ induce runs
$q_0 q_1\ldots$ of $T$ on words $(x_0\cup y_0)(x_1\cup y_1)\ldots$.

Hence the synthesis problem can be solved by transforming the input formula
into a deterministic transition system and then solving the induced game
with a suitable objective.

\section{Synthesis via EL Games}\label{sec:ELsynthesis}

It has been shown~\cite{LTLfplus} that the synthesis problem of \LTLfp (resp. \PPLTLp) reduces
to the solution of games with so-called Emerson-Lei objectives. These are
Boolean combinations of recurrence and persistence objectives, formally defined as follows.

Given a finite set $\Gamma$ of events,
an \emph{Emerson-Lei (EL) formula} (over $\Gamma$) is a positive Boolean formula
over atoms of the shape 
	$\mathsf{GF}\,a$ or $\mathsf{FG}\,a$, where $a\in \Gamma$.
We evaluate EL formulas over infinite sequences
of sets of events, that is, over elements of $(2^\Gamma)^\omega$. Given $L_1 L_2\ldots\in (2^\Gamma)^\omega$, we put 
\begin{align*}
L_1 L_2\ldots&\models \mathsf{GF}\,a &\Leftrightarrow && \forall i.\,\exists j>i.\, a\in L_j\\
L_1 L_2\ldots&\models \mathsf{FG}\,a &\Leftrightarrow && \exists i.\,\forall j>i.\, a\in L_j
\end{align*}
The evaluation of Boolean combinations ($\land,\lor$) of EL formulas and of Boolean constants ($\top$,$\bot$) is as expected.
Given a finite set $Q$, an infinite sequence $\pi=q_0 q_1 \ldots \in Q^\omega$,
a labeling function $\gamma:Q\to 2^{\Gamma}$,
and an EL formula $\varphi$, we denote
$\gamma(q_0)\gamma(q_1)\ldots\models \varphi$ by $\pi\models\varphi$.

An \emph{Emerson-Lei objective} $O=(\Gamma,\gamma,\varphi)$ on a finite set $Q$ is given in the form of a set $\Gamma$ of events, a labeling function
$\gamma:Q\to 2^\Gamma$ and an EL formula $\varphi$ over $\Gamma$; we generally assume that $|\Gamma|\leq |Q|$. A sequence $\pi\in Q^\omega$ is 
contained in $O$ (by slight abuse of notation) if and only if $\pi\models\varphi$.

An \emph{Emerson-Lei automaton} $\mathcal{A}=(T,O)$ is a transition
system $T$ together with an EL objective on $T$. Then $\mathcal{A}$ recognizes
the ($\omega$-regular) language $L(\mathcal{A})$ of all infinite words
for which there is run $\pi$ of $T$ such that $\pi\in O$.

An \emph{Emerson-Lei game} $G=(A,O)$ consists of a game arena $A=(V,E)$
together with an EL objective $O$ on $A$, specifying the winning
plays. EL games are \emph{determined}, that is, every game node
is won by exactly one of the players.

\begin{theorem}\label{thm:ELsolution}\cite{DBLP:journals/apal/McNaughton93,DBLP:journals/tcs/Zielonka98}
Emerson-Lei games with $n$ nodes and $k$ Emerson-Lei events can be solved in time $\mathcal{O}(k!\cdot n^{k+2})\in 2^{\mathcal{O}(k\log{n})}$;
winning strategies require at most $k!$ memory values.
\end{theorem}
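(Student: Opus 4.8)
The plan is to prove this via the classical recursive algorithm of McNaughton and Zielonka, driven by the Zielonka tree of the winning condition, so the first step is to recast the EL objective as a Muller objective to which that machinery applies. Since the labeling $\gamma$ is fixed per node, the truth of $\varphi$ on a play $\pi$ depends only on the set $S\subseteq V$ of nodes visited infinitely often: $\mathsf{GF}\,a$ holds iff some $q\in S$ has $a\in\gamma(q)$, and $\mathsf{FG}\,a$ holds iff every $q\in S$ has $a\in\gamma(q)$. Hence the condition is prefix-independent and $\omega$-regular, and is captured by the Boolean evaluation of $\varphi$ on the infinity set. A mild subtlety is that $\mathsf{FG}$ atoms are not functions of the set of \emph{events} seen infinitely often alone; this is handled either by adjoining complementary events or, to keep the eventual bound at $k!$ rather than $(2k)!$, by building the Zielonka tree directly for the EL formula $\varphi$ over $\Gamma$. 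Determinacy is already asserted in the excerpt and also falls out of the algorithm, which partitions $V$ constructively.

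Second, I would build the Zielonka tree $\mathcal{Z}_\mathcal{F}$ of the induced family $\mathcal{F}$: the root is labeled $\Gamma$; a node labeled $U$ with $U\in\mathcal{F}$ has as children the maximal proper subsets $U'\subsetneq U$ with $U'\notin\mathcal{F}$, and dually when $U\notin\mathcal{F}$, a leaf being reached when no such subset exists. Levels then alternate between \emph{system} nodes ($U\in\mathcal{F}$) and \emph{environment} nodes. The key combinatorial fact to establish here is that $\mathcal{Z}_\mathcal{F}$ has height at most $k$ and at most $k!$ leaves: every root-to-leaf branch is a strictly decreasing chain in the subset lattice of $\Gamma$, and counting the orders in which the $k$ events are discarded along such a chain bounds the number of branches by $k!$.

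Third, I would run and verify the recursive solver guided by $\mathcal{Z}_\mathcal{F}$. On a subgame restricted to the colors of a tree node labeled $U$: if $U$ is a system node, one repeatedly removes the system attractor of the region won via the children subgames (those associated with the maximal losing subsets), recursing with strictly smaller color sets, and uses the dual construction at environment nodes. Correctness is proved by induction on $\mathcal{Z}_\mathcal{F}$: one shows the returned sets partition $V$, that the system strategy assembled from the child strategies and the attractor strategies forces the infinitely-visited colors into a system leaf of the tree (hence into $\mathcal{F}$), and symmetrically for the environment. The resulting finite-memory strategy only needs to track the current branch of $\mathcal{Z}_\mathcal{F}$, so its memory is bounded by the number of leaves, at most $k!$, giving the memory claim. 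For the running time, each recursive call performs an attractor computation in $\mathcal{O}(|E|)=\mathcal{O}(n^2)$ and its outer loop iterates at most $n$ times (each round deletes at least one node), recursing into children with strictly smaller color sets; the recursion depth is the height of $\mathcal{Z}_\mathcal{F}$, at most $k$. Unfolding the recurrence, the number of leaf computations is bounded by the product of the per-level iteration counts ($n^k$) and the branching of $\mathcal{Z}_\mathcal{F}$ (at most $k!$), each costing $\mathcal{O}(n^2)$, for the claimed $\mathcal{O}(k!\cdot n^{k+2})\subseteq 2^{\mathcal{O}(k\log n)}$.

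I expect the main obstacle to be the correctness induction of the third step: establishing, simultaneously for both players, that the attractor-based decomposition computes exactly the winning regions and that the strategy glued together along a branch of $\mathcal{Z}_\mathcal{F}$ indeed confines the infinity set to $\mathcal{F}$ (respectively its complement). By contrast, the $k!$ leaf bound and the complexity recurrence are comparatively mechanical once the tree is in place, and recasting the EL objective as a Muller condition over the events is a short observation modulo the $\mathsf{FG}$ encoding noted above.
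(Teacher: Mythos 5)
Your proposal is correct and follows essentially the same route as the paper, which does not prove this theorem itself but cites McNaughton/Zielonka for the recursive attractor-based algorithm and relies on the Zielonka-tree machinery (with the $k!$ leaf bound and the Dziembowski--Jurdzi\'nski--Walukiewicz memory bound via existential subtrees) exactly as you reconstruct it. Your explicit flagging of the $\mathsf{FG}$-atom subtlety---that the condition is not a function of the infinity set of events alone, so the Zielonka tree must be built directly for the EL formula over $\Gamma$ to keep the bound at $k!$ rather than $(2k)!$---is precisely the point the cited symbolic construction addresses.
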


Recently, a symbolic algorithm for the solution of EL games has been proposed~\cite{DBLP:conf/fossacs/HausmannLP24}. This algorithm
leverages \emph{Zielonka trees}, that is, succinct semantic representations of EL objectives, to transform EL objectives into equivalent
fixpoint equation systems. 
EL game solution then can be performed by symbolic solution of the corresponding equation system.
The resulting algorithm realizes the time bound stated in Theorem~\ref{thm:ELsolution}.

\begin{remark}\label{rem:ELstrat}
Strategy extraction for EL games works as described in~\cite{DBLP:conf/fossacs/HausmannLP24}.
Leaves in Zielonka trees are labelled with sets of events, thought of as a memory of recently visited events.
Winning strategies then use leaves in Zielonka trees as memory values and combine the events of current game
nodes with the events in the current memory values to update the set of recently visited events
and obtain new memory values. 
Strategies are based 
on \emph{existential subtrees} of Zielonka trees, reducing strategy sizes~\cite{DBLP:conf/lics/DziembowskiJW97}.
\end{remark}

Next, we show how this symbolic solution algorithm for EL games can be used for \LTLfp (resp. \PPLTLp) synthesis.

Consider an input \LTLfp or \PPLTLp formula $\Psi$ given in positive normal form, that is, given as a positive Boolean formula over $k$ atoms
$\mathbb{Q}_i\Phi_i$ where $\mathbb{Q}_i\in\{\exists,\forall,\exists\forall,\forall\exists\}$,
and where all $\Phi_i$ are \LTLf (resp. \PPLTL) formulas.
The synthesis algorithm transforms $\Psi$ into an equivalent EL automaton and then solves the EL game induced by this automaton.
\vspace{-10pt}

\paragraph{Step 1.} For each $i\in[k]$, convert the finite trace formula $\Phi_i$ into an equivalent DFA $(D_i, F_i)$ where
$D_i=(2^{AP},Q_i,\iota_i,\delta_i)$.
Assume without loss of generality that $\iota_i$ does not have incoming transitions.
If $\mathbb{Q}_i=\forall$, then add $\iota_i$ to $F_i$
and turn every non-accepting state into a non-accepting sink state
(for $q\notin F_i$ and all $a\in\Sigma$, put $\delta_i(q,a)=q$); if $\mathbb{Q}_i=\exists$, then remove $\iota_i$ from $F_i$ and turn every accepting state into an accepting sink state (for $q\in F_i$ and all $a\in\Sigma$, put $\delta_i(q,a)=q$).
Construct the product transition system $D_\Psi=\prod_{i\in[k]} D_i$ and let $Q_\Psi$ denote the state space of $D_\Psi$.
\smallskip

We point out that turning (non)accepting states into sinks in Step 1. does not increase the size of automata. For $i$ such that $\mathbb{Q}_i=\forall$ or $\mathbb{Q}_i=\exists$,
the resulting automata intuitively incorporate memory on whether a (non)accepting state has been visited so far, or not. The following is immediate.

\begin{lemma}\label{lem:localMemory}
Let $\mathbb{Q}_i=\exists$ (resp. $\mathbb{Q}_i=\forall$). Then an infinite path in $D_i$ 
visits $F_i$ (resp. $Q_i\setminus F_i$) at least once if and only if
the run eventually visits only states from $F_i$ (resp. $Q_i\setminus F_i$).
\end{lemma}
For $q=(q_1,\ldots,q_k)\in Q$, let $L_q=\{i\in [k]\mid q_i\in F_i, \mathbb{Q}_i\in \{\forall,\exists\}\}$
denote the local events for which the corresponding automaton is in an accepting state in $q$.
\vspace{-10pt}

\paragraph{Step 2.} 
Define the EL objective $O=(\Gamma,\gamma,\varphi)$ by putting $\Gamma=[k]$ and $\gamma(q_1,\ldots,q_k)=\{i\in[k]\mid q_i\in F_i\}$
for $(q_1,\ldots,q_k)\in Q_\Psi$.
Depending on the shape of $\mathbb{Q}_i$, define
$\varphi_i$ to be $\mathsf{GF}\,i$ (if $\mathbb{Q}_i\in\{\exists,\forall,\forall\exists\}$), or $\mathsf{FG}\,i$ (if $\mathbb{Q}_i=\exists\forall$).
The EL formula $\varphi$ is obtained from the input formula $\Psi$ by replacing each atom $\mathbb{Q}_i \Phi_i$ with $\varphi_i$.
Define the deterministic EL automaton $\mathcal{A}_\Psi=(D_\Psi,O)$.\smallskip

\begin{proposition}~\cite{LTLfplus}
The formula $\Psi$ is equivalent to the EL automaton $\mathcal{A}_\Psi$.
\end{proposition}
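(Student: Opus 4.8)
The plan is to prove $[\Psi]=L(\mathcal{A}_\Psi)$ by leveraging determinism of $D_\Psi$. Every infinite trace $\tau\in(2^{AP})^\omega$ has a unique run $\pi=q_0 q_1\ldots$ in $D_\Psi$, and $\pi$ projects componentwise onto the unique runs $\pi_i=q^i_0 q^i_1\ldots$ of the factors $D_i$ (with $q^i_0=\iota_i$); hence $\tau\in L(\mathcal{A}_\Psi)$ iff $\pi\models\varphi$. Because $[\cdot]$ sends $\lor,\land$ to $\cup,\cap$ while the EL semantics sends them to disjunction and conjunction, and because $\Psi$ is a \emph{positive} Boolean combination of its atoms $\mathbb{Q}_i\Phi_i$ (mirrored by the atom-wise replacement producing $\varphi$), a routine induction on the Boolean structure reduces the goal to the single-atom equivalence
\[
\tau\in[\mathbb{Q}_i\Phi_i]\quad\Longleftrightarrow\quad\pi\models\varphi_i\qquad(i\in[k]).
\]
The bridge to the automaton is the elementary fact that the length-$n$ prefix of $\tau$ is accepted by the \emph{original} DFA $(D_i,F_i)$ exactly when the $n$-th state of its run lies in $F_i$; for an atom whose DFA Step~1 leaves untouched this run is precisely $\pi_i$, and through the labeling $\gamma$ the event $i$ fires along $\pi$ exactly at the positions where the $i$-th component sits in $F_i$.

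The recurrence and persistence atoms involve no DFA surgery, so they follow directly. For $\mathbb{Q}_i=\forall\exists$ we have $\varphi_i=\mathsf{GF}\,i$, and $\pi\models\mathsf{GF}\,i$ holds iff $\pi_i$ meets $F_i$ infinitely often, that is, iff infinitely many prefixes of $\tau$ satisfy $\Phi_i$, which is exactly $\tau\in[\forall\exists\Phi_i]$. Dually, for $\mathbb{Q}_i=\exists\forall$ we have $\varphi_i=\mathsf{FG}\,i$, and $\pi\models\mathsf{FG}\,i$ holds iff $\pi_i$ eventually remains in $F_i$, that is, iff all but finitely many prefixes satisfy $\Phi_i$, which is exactly $\tau\in[\exists\forall\Phi_i]$.

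The guarantee and safety atoms are where the surgery of Step~1 does its work, converting a one-shot reachability or safety demand into the recurrence event $\mathsf{GF}\,i$ so that Lemma~\ref{lem:localMemory} becomes applicable. For $\mathbb{Q}_i=\exists$, making every accepting state absorbing alters only transitions out of $F_i$, hence changes neither whether nor when the run first enters $F_i$; so the modified run $\pi_i$ meets $F_i$ iff some prefix is accepted by the original automaton, that is, iff $\tau\in[\exists\Phi_i]$. Lemma~\ref{lem:localMemory} then identifies ``$\pi_i$ meets $F_i$ at least once'' with ``$\pi_i$ eventually stays in $F_i$'', and the latter is exactly $\pi\models\mathsf{GF}\,i=\varphi_i$. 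The case $\mathbb{Q}_i=\forall$ is symmetric: turning non-accepting states into sinks preserves whether the run ever leaves $F_i$, so $\pi_i$ stays in $F_i$ forever iff every prefix is accepted, that is, iff $\tau\in[\forall\Phi_i]$; the $Q_i\setminus F_i$ form of Lemma~\ref{lem:localMemory} shows this coincides with $\pi\models\mathsf{GF}\,i$.

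I expect the only delicate point to be the initial-state bookkeeping, not the reachability argument itself. One must reconcile the adjustments ``remove $\iota_i$ from $F_i$'' (for $\exists$) and ``add $\iota_i$ to $F_i$'' (for $\forall$) with the intended treatment of the empty prefix, and confirm that they dovetail with the sink construction rather than interfering with it. Here the hypothesis that $\iota_i$ carries no incoming transitions is precisely what keeps matters clean: since $\iota_i$ occurs only at position $0$ of $\pi_i$, re-labelling it affects only the empty-prefix evaluation and can neither create nor destroy an occurrence of event $i$ later on. Once that caveat is discharged, all four cases close uniformly and the Boolean induction of the first paragraph completes the proof.
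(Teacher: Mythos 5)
Your proof is correct and matches the route the paper intends: the paper itself only cites \cite{LTLfplus} for this proposition, but the surrounding text (Lemma~\ref{lem:localMemory} stated immediately before, and the explicit proof sketch given for the analogous Proposition~\ref{prop:formulaToDMPA}) makes clear that the intended argument is exactly yours --- a Boolean induction reducing to the atom-wise equivalences, with the recurrence/persistence atoms handled directly and the guarantee/safety atoms handled via the sink construction and Lemma~\ref{lem:localMemory}. Your handling of the initial-state relabelling via the no-incoming-transitions assumption is also the right way to discharge the one delicate point.
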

\vspace{-10pt}
\paragraph{Step 3.} Solve the EL game induced by $\mathcal{A}_\Psi$. If the system player wins the initial state 
$(\iota_1,\ldots,\iota_k)$ of $D_\Psi$,
then extract a witnessing strategy.


\begin{theorem}
The \LTLfp (resp. \PPLTLp) synthesis problem 
can be decided symbolically via EL games in \textsc{2EXPTIME} (resp. \textsc{EXPTIME}).
\end{theorem}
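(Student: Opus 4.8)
The plan is to verify the complexity bound by tracking the size of the objects constructed in Steps 1--3 and then invoking Theorem~\ref{thm:ELsolution}. The key quantities are the number $n$ of nodes in the induced game arena and the number $k$ of Emerson-Lei events, since the stated bound $\mathcal{O}(k!\cdot n^{k+2})\in 2^{\mathcal{O}(k\log n)}$ depends on both. I would argue the two cases (\LTLfp and \PPLTLp) in parallel, the only difference being the DFA sizes established in the preliminaries.

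First I would bound the size of the product transition system $D_\Psi$. In Step 1, each finite trace subformula $\Phi_i$ is converted into a DFA $(D_i,F_i)$. For \PPLTL, each such DFA has size $2^{\mathcal{O}(|\Phi_i|)}$, and for \LTLf, each has size $2^{2^{\mathcal{O}(|\Phi_i|)}}$, by the results cited in the preliminaries. The manipulations turning (non)accepting states into sinks do not increase the size, as remarked in the text after Step 1. Taking the product over the $k$ subformulas multiplies the state counts, so $|Q_\Psi|=\prod_{i\in[k]}|Q_i|$ is bounded by $2^{\mathcal{O}(|\Psi|)}$ in the \PPLTLp case and by $2^{2^{\mathcal{O}(|\Psi|)}}$ in the \LTLfp case (the product of $k$ singly/doubly exponential terms remains singly/doubly exponential in the total formula size). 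The induced game arena $A_{D_\Psi}$ has node set $Q_\Psi\cup Q_\Psi\times 2^X\cup Q_\Psi\times 2^X\times 2^Y$, which only multiplies $|Q_\Psi|$ by the constant-in-$|\Psi|$ but exponential-in-$|AP|$ factors $2^{|X|}$ and $2^{|X|}\cdot 2^{|Y|}$; since $|X|,|Y|\le|AP|\le|\Psi|$, this keeps $n$ singly exponential (\PPLTLp) or doubly exponential (\LTLfp) in $|\Psi|$.

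Next I would bound the number of events: by Step 2, $\Gamma=[k]$, so $k$ is exactly the number of Manna-Pnueli atoms in $\Psi$, hence $k\le|\Psi|$ is linear in the input size. Plugging these into Theorem~\ref{thm:ELsolution}, the EL game is solved in time $2^{\mathcal{O}(k\log n)}$. In the \PPLTLp case, $\log n=\mathcal{O}(|\Psi|)$ and $k=\mathcal{O}(|\Psi|)$, giving $2^{\mathcal{O}(|\Psi|^2)}$, which is single-exponential and hence in \textsc{EXPTIME}. In the \LTLfp case, $\log n=2^{\mathcal{O}(|\Psi|)}$ and $k=\mathcal{O}(|\Psi|)$, giving $2^{\mathcal{O}(|\Psi|\cdot 2^{\mathcal{O}(|\Psi|)})}=2^{2^{\mathcal{O}(|\Psi|)}}$, which is double-exponential and hence in \textsc{2EXPTIME}. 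That the procedure is \emph{symbolic} follows from the symbolic Zielonka-tree algorithm discussed after Theorem~\ref{thm:ELsolution}, which realizes the same time bound while operating on symbolic representations; the correctness of the reduction itself is guaranteed by the Proposition stating $\Psi$ is equivalent to $\mathcal{A}_\Psi$.

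I expect the main obstacle to be purely bookkeeping rather than conceptual: one must be careful that taking the product of $k$ DFAs, each individually of (doubly) exponential size, yields a bound that is still (doubly) exponential in the \emph{total} formula length $|\Psi|$ rather than blowing up by an extra exponential. This is where writing $\sum_i|\Phi_i|\le|\Psi|$ and using $\prod_i 2^{c|\Phi_i|}=2^{c\sum_i|\Phi_i|}\le 2^{c|\Psi|}$ (and its doubly-exponential analogue, which requires slightly more care since $\prod_i 2^{2^{c|\Phi_i|}}$ must be re-expressed as $2^{\sum_i 2^{c|\Phi_i|}}\le 2^{k\cdot 2^{c|\Psi|}}=2^{2^{\mathcal{O}(|\Psi|)}}$) does the work. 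Everything else is a direct substitution into the already-established Theorem~\ref{thm:ELsolution}.
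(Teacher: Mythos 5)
Your proposal is correct and follows essentially the same route as the paper: bound the size of the product game arena ($2^{2^{\mathcal{O}(|\Psi|)}}$ for \LTLfp, $2^{\mathcal{O}(|\Psi|)}$ for \PPLTLp), observe that the number of events is linear in $|\Psi|$, and invoke Theorem~\ref{thm:ELsolution} together with the symbolic Zielonka-tree algorithm; your $k$ is the paper's $k+d$. The extra bookkeeping you supply (the product-of-exponentials estimate and the $2^{|X|}\cdot 2^{|Y|}$ arena factors) is correct and merely makes explicit what the paper leaves implicit.
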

In more detail, consider an input formula $\Psi$
of size $n$ and consisting of $k$ recurrence and persistence formulas and 
$d$ guarantee and safety formulas.
The constructed EL game over $\mathcal{A}_\Psi$ is of size at most
$n'=2^{2^n}$ (resp. $n'=2^n$) and the objective $O$ has $k+d$ events. 
By Theorem~\ref{thm:ELsolution}, it can be solved in time
$2^{\mathcal{O}((k+d)\log{n'})}$.

\begin{remark}
All steps in the described synthesis algorithm are open to symbolic implementation.

The construction
of individual DFAs for finite trace subformulas in Steps 1. and 2. is based on
the powerset construction which is amenable to symbolic implementation, and taking the product of the individual automata
in Step 3. is an inherently symbolic operation. Finally, the game solution
in Step 4. can be implemented using the symbolic algorithm for EL game solution
from~\cite{DBLP:conf/fossacs/HausmannLP24}.
\end{remark}

\begin{remark}
Transducers for realizable specifications $\Psi$ are obtained by extracting a winning strategy 
for the system player in the EL game induced by $\mathcal{A}_\Psi$ (see Remark~\ref{rem:ELstrat}).
\end{remark}

\section{MP Automata and Games}\label{par:ELMPgames}

Next, we introduce automata and games with objectives that support, in addition to the Boolean combinations of recurrence and persistence allowed in EL objectives,
also combinations with guarantee and safety objectives. We call such objectives Manna-Pnueli (MP) objectives.
Adding guarantee and safety properties to EL objectives enables a direct translation from \LTLfp (resp. \PPLTLp) formulas to automata, as the structure of the resulting objectives directly corresponds to the high-level structure of formulas. In Section~\ref{sec:synthesis} below, MP automata and 
the solution of MP games will be instrumental
to an alternative solution of the synthesis problem for \LTLfp (resp. \PPLTLp).

Given a finite set $\Gamma$ of events,
a \emph{Manna-Pnueli formula} is a positive Boolean formula
over atoms of the shape 
	$\mathsf{GF}\,a$, $\mathsf{FG}\,a$,
	$\mathsf{F}\,a$, or $\mathsf{G}\,a$, where $a\in \Gamma$.
    We refer to events that occur in a concrete formula only in atoms of the shape
    $\mathsf{F}\,a$ or $\mathsf{G}\,a$ as \emph{local events}, and
    as \emph{Emerson-Lei events} to all other events.
Given $L_1 L_2\ldots\in (2^\Gamma)^\omega$, we put 
\begin{align*}
L_1 L_2\ldots&\models \mathsf{F}\,a &\Leftrightarrow && \exists i.\, a\in L_i\\
L_1 L_2\ldots&\models \mathsf{G}\,a &\Leftrightarrow && \forall i. \,a\in L_i
\end{align*}
All other operators are evaluated in the same way as for EL formulas;
denote $\gamma(q_0)\gamma(q_1)\ldots\models \varphi$ by $\pi\models\varphi$,
where $\gamma:Q\to 2^\Gamma$ is a labelling function, and $\pi=q_0 q_1 \ldots \in Q^\omega$.

A \emph{Manna-Pnueli objective} $O=(\Gamma,\gamma,\varphi)$ on a finite set $Q$ consists of a set $\Gamma$ of events, a labeling function
$\gamma:Q\to 2^\Gamma$ and an MP formula $\varphi$ over $\Gamma$; assume that $|\Gamma|\leq |Q|$. A sequence $\pi\in Q^\omega$ is 
contained in $O$ if and only if $\pi\models\varphi$.

We point out that like EL objectives, MP objectives are $\omega$-regular (that is, they can be transformed to parity objectives); however, unlike
EL objectives, MP objectives are not prefix independent
(that is, there may be $u_1,u_2\in Q^*$, $v\in Q^\omega$ such that $u_1 v\in O$ but $u_2 v\notin O$).

A \emph{Manna-Pnueli automaton} $\mathcal{A}=(T,O)$ is a transition
system $T$ with set $Q$ of states together with a Manna-Pnueli objective on $Q$. Then $\mathcal{A}$ recognizes
the ($\omega$-regular) language $L(\mathcal{A})$ consisting of all infinite words
for which there is a run $\pi$ of $T$ such that $\pi\in O$.

A \emph{Manna-Pnueli game} $G=(A,O)$ consists of a game arena $A=(V,E)$
together with a Manna-Pnueli objective $O$ on $A$. By definition, MP games are determined.


\begin{example}\label{ex:MP}
Consider the game $G$ with events $a,b,c,d$, with Manna-Pnueli objective
$\varphi=(\mathsf{GF}\,a\land\mathsf{GF}\,b\land\mathsf{G}\,d)\lor (\mathsf{FG}\,\neg b\land \mathsf{F}\,c)$ and
with node ownership indicated
by circles (system player) and boxes (environment player).

    In this example the system player wins every node by a
strategy that uses memory to alternatingly move from the 
node labelled with $d$ to the nodes labelled with $a,d$ and
$b,d$, respectively; the strategy always moves from the node
labelled with $a$ to the node labelled with $c$.

\begin{center}
\scalebox{0.8}{
\tikzset{every state/.style={minimum size=15pt}, every node/.style={minimum size=15pt}}
    \begin{tikzpicture}[
		auto,
    node distance=1.7cm,
    thick
    ]
    \tikzstyle{every state}=[
        draw = black,
        thick,
        fill = white,
        minimum size = 4mm
    ]
    
     \node[state, label={$d$}] (1)  {};
     \node[state, rectangle,label={$c,d$}] [left of=1] (2) {};
     \node[state,label={left:$a,d$}] [below of=1] (3) {};
     \node[state, rectangle,label={$b,d$}] [right of=1] (4) {};
     \node[state,label={left:$a$}] [below of=4] (5)  {};
     \node[state,label={$c$}] [right of=5] (6)  {};

     \node[label={$G$}] (0) at (-2.6,0.4) {};

     \path[->] (1) edge[bend left=15] node {} (2);
     \path[->] (2) edge[bend left=15] node {} (1);

     \path[->] (1) edge[bend left=15] node {} (3);
     \path[->] (3) edge[bend left=15] node {} (1);

     \path[->] (1) edge[bend left=15] node {} (4);
     \path[->] (4) edge[bend left=15] node {} (1);

     \path[->] (4) edge[bend left=15] node {} (5);
     \path[->] (5) edge[bend left=15] node {} (4);

     \path[->] (5) edge[bend left=15] node {} (6);
     \path[->] (6) edge[bend left=15] node {} (5);

     \path[->] (2) edge[bend left=45] node {} (4);

  \begin{pgfonlayer}{background}
    \filldraw [line width=4mm,join=round,black!10]
      (-2,-2)  rectangle (3.7,0.9);
  \end{pgfonlayer}

  \end{tikzpicture}
  }

\end{center}

Any play
following this strategy either avoids the node labelled with just $a$ forever and infinitely often visits the nodes labeled with $a,d$ and $b,d$
(and then satisfies $\mathsf{GF}\,a\land\mathsf{GF}\,b\land\mathsf{G}\,d$),
or it eventually only visits the two bottom right nodes
(and then satisfies $\mathsf{FG}\,\neg b\land \mathsf{F}\,c$).
\end{example}

\section{Synthesis via MP Games}\label{sec:synthesis}
We now show how \LTLfp and \PPLTLp synthesis both reduce to the solution of MP games.
To this end, we consider an input \LTLfp or \PPLTLp formula $\Psi$ given in positive normal form over $k$ finite trace formulas,
as in Section~\ref{sec:ELsynthesis}.
The synthesis algorithm transforms $\Psi$ to an equivalent deterministic MP automaton and solves the MP game induced by the automaton.
We point out that the transformation from $\Psi$ to an MP automaton is immediate (and does, unlike the construction in
Section~\ref{sec:ELsynthesis}, not transform any states into sink states), as MP automata natively embed the
high-level structure of \LTLfp and \PPLTLp formulas. 
\vspace{-10pt}

\paragraph{Step 1a.} Convert the finite trace formulas $\Phi_i$ into equivalent DFAs $(D_i,F_i)$ with $D_i=(2^{AP},Q_i,\iota_i,\delta_i)$.
 If $\mathbb{Q}_i=\forall$, then add $\iota_i$ to $F_i$; if $\mathbb{Q}_i=\exists$, then remove $\iota_i$ from $F_i$.
 Construct the product transition system $D'_\Psi=\prod_{i\in[k]} D_i$.
\vspace{-10pt}

\paragraph{Step 2a.}
Define the Manna-Pnueli objective $O'=(\Gamma,\gamma,\varphi)$ by putting $\Gamma=[k]$ and $\gamma(q_1,\ldots,q_k)=\{i\in[k]\mid q_i\in F_i\}$
for $(q_1,\ldots,q_k)\in Q$.
Depending on the shape of $\mathbb{Q}_i$, define
$\varphi_i$ to be $\mathsf{F}\,i$ (if $\mathbb{Q}_i=\exists$), $\mathsf{G}\,i$ (if $\mathbb{Q}_i=\forall$), $\mathsf{GF}\,i$  (if $\mathbb{Q}_i=\forall\exists$), or $\mathsf{FG}\,i$ (if $\mathbb{Q}_i=\exists\forall$).
The Manna-Pnueli formula $\varphi$ is obtained from the input formula $\Psi$ by replacing each atom $\mathbb{Q}_i \Phi_i$ with $\varphi_i$.
Define the deterministic Manna-Pnueli automaton $\mathcal{A}'_\Psi=(D'_\Psi,O')$.
\vspace{-10pt}

\paragraph{Step 3a.} Solve the Manna-Pnueli game induced by $\mathcal{A}'_\Psi$. If the system player wins the initial state 
$(\iota_1,\ldots,\iota_k)$ of $D'_\Psi$,
then extract a witnessing strategy.\smallskip

We show correctness of the simpler automata construction in Step 1a.
To this end, define the objective $O_i=(\Gamma_i,\gamma_{i},\varphi_i)$ on $D_i$, where $\Gamma_i=[1]$ and $\gamma_{i}:D_i\to[1]$ maps
nodes from $F_i$ to $\{1\}$, and all other nodes to $\emptyset$.
Then showing the following lemma is immediate.
\begin{lemma}\label{lem:DMPAconstruction}
Let $\pi$ be an infinite trace. For all $i\in[k]$, we have
$\pi\models \mathbb{Q}_i\Phi_i$ if and only if
$\pi\in L(D_i,O_i)$.
\end{lemma}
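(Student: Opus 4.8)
The plan is to prove the biconditional by a straightforward case analysis on the quantifier $\mathbb{Q}_i\in\{\exists,\forall,\forall\exists,\exists\forall\}$, after first fixing the correspondence between the unique run of the deterministic automaton $D_i$ on $\pi$ and the finite prefixes of $\pi$. Write $\pi=a_0a_1\ldots$ and let $q_0q_1\ldots$ be the run of $D_i$ on $\pi$, so that $q_0=\iota_i$ and $q_{j+1}=\delta_i(q_j,a_j)$. Since $(D_i,F_i)$ was built as a DFA equivalent to $\Phi_i$, the length-$j$ prefix $a_0\ldots a_{j-1}$ of $\pi$ lies in $[\Phi_i]$ exactly when $q_j$ belongs to the \emph{original} accepting set; and by definition of $\gamma_i$, the event $1$ is observed at position $j$ (that is, $1\in\gamma_i(q_j)$) exactly when $q_j$ lies in the \emph{modified} set $F_i$. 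The proof then amounts to matching each Manna-Pnueli atom against the corresponding prefix quantifier: $\mathsf{GF}\,1$ with infinitely many accepting prefixes, $\mathsf{FG}\,1$ with all but finitely many, $\mathsf{F}\,1$ with at least one, and $\mathsf{G}\,1$ with all of them.

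For the two Emerson-Lei cases the matching is immediate, because Step~1a leaves $F_i$ unchanged. When $\mathbb{Q}_i=\forall\exists$ we have $\varphi_i=\mathsf{GF}\,1$, which holds iff $q_j$ is accepting for infinitely many $j$, i.e.\ iff infinitely many prefixes of $\pi$ satisfy $\Phi_i$, which is precisely $\pi\models\forall\exists\Phi_i$; the case $\mathbb{Q}_i=\exists\forall$ with $\varphi_i=\mathsf{FG}\,1$ is dual. In both cases the acceptance status of the empty prefix (position $j=0$) is irrelevant, as it affects neither an infinitely-often nor an eventually-always condition, so no adjustment of the initial state is needed.

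The only point requiring care, and the reason Step~1a adds or deletes $\iota_i$ from $F_i$, is the treatment of the empty prefix in the guarantee and safety cases, where $\varphi_i$ is $\mathsf{F}\,1$ respectively $\mathsf{G}\,1$ and the status of position $0$ does matter. Here I would invoke the standing assumption that $\iota_i$ has no incoming transitions, so that $q_j=\iota_i$ holds only for $j=0$; hence the modification of $F_i$ at $\iota_i$ changes the observed events only at position $0$, while every position $j\geq 1$ still agrees with the original acceptance. For $\mathbb{Q}_i=\exists$, removing $\iota_i$ from $F_i$ renders $\mathsf{F}\,1$ equivalent to the existence of a nonempty prefix of $\pi$ in $[\Phi_i]$, matching $\exists\Phi_i$; for $\mathbb{Q}_i=\forall$, adding $\iota_i$ to $F_i$ renders $\mathsf{G}\,1$ equivalent to all nonempty prefixes lying in $[\Phi_i]$, matching $\forall\Phi_i$. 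This initial-state bookkeeping is the only genuinely delicate step; once it is handled, each of the four equivalences is a direct unfolding of the relevant semantic definitions, which is why the lemma can be stated as immediate.
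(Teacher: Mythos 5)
Your proposal is correct and fills in exactly the details that the paper treats as immediate: the correspondence between the run state $q_j$ and the length-$j$ prefix, the irrelevance of finitely many positions for the $\mathsf{GF}/\mathsf{FG}$ cases, and the role of the no-incoming-transitions assumption on $\iota_i$ in making the Step~1a adjustment of $F_i$ affect only position $0$ for the $\mathsf{F}/\mathsf{G}$ cases. The paper offers no further argument, so there is nothing to contrast; your case analysis is the intended one.
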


\begin{proposition}\label{prop:formulaToDMPA}
The formula $\Psi$ is equivalent to the MP automaton $\mathcal{A}'_\Psi$.
\end{proposition}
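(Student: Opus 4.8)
The plan is to derive the global equivalence from the per-atom equivalence of Lemma~\ref{lem:DMPAconstruction}, exploiting that $\Psi$ and the MP formula $\varphi$ share the same positive Boolean skeleton over $[k]$. First I would fix an arbitrary $\tau\in(2^{AP})^\omega$. Since $D'_\Psi$ is a product of DFAs and hence deterministic, $\tau$ induces a unique run $\pi=q_0 q_1\ldots$ with $q_j=(q_j^1,\ldots,q_j^k)$, and by the product construction the $i$-th projection $\pi_i=q_0^i q_1^i\ldots$ is exactly the unique run of $D_i$ on $\tau$. The crucial bookkeeping step is the projection observation: for each event $i$ and each position $j$, we have $i\in\gamma(q_j)$ iff $q_j^i\in F_i$ iff $1\in\gamma_i(q_j^i)$. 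Because each atom $\varphi_i$ mentions only the single event $i$, its truth value along $\pi$ under $\gamma$ depends solely on the membership sequence of $i$, which coincides with the membership sequence of the single event along $\pi_i$ under $\gamma_i$. I would therefore conclude that $\pi\models\varphi_i$ iff $\pi_i\in O_i$, i.e.\ iff $\tau\in L(D_i,O_i)$, which by Lemma~\ref{lem:DMPAconstruction} is equivalent to $\tau\models\mathbb{Q}_i\Phi_i$.

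It then remains to propagate this atom-wise equivalence through the Boolean structure. As $\varphi$ arises from $\Psi$ by replacing each atom $\mathbb{Q}_i\Phi_i$ with $\varphi_i$ while preserving the connectives $\land,\lor$, and as MP semantics interprets $\land,\lor$ exactly as the infinite-trace semantics interprets $\cup,\cap$ in the inductive definition of $[\cdot]$, a routine induction on the structure of the positive Boolean combination $\Psi$ yields $\pi\models\varphi$ iff $\tau\in[\Psi]$. Since $\pi$ is the unique run of the deterministic automaton $\mathcal{A}'_\Psi$ on $\tau$, we have $\pi\models\varphi$ iff $\tau\in L(\mathcal{A}'_\Psi)$, which closes the argument.

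The only genuinely delicate point is the projection observation tying the product labeling $\gamma$ to the component labelings $\gamma_i$, together with the fact that the subtle empty-prefix bookkeeping---handled by adding or removing $\iota_i$ from $F_i$ for $\mathbb{Q}_i\in\{\forall,\exists\}$ in Step~1a---is already absorbed into Lemma~\ref{lem:DMPAconstruction}. Given that positive normal form forbids negation at the top level, the Boolean induction itself carries no surprises, since the connective semantics of MP formulas and of $[\cdot]$ agree on the nose.
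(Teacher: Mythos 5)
Your proposal is correct and follows essentially the same route as the paper, which proves the proposition by induction over the positive Boolean structure of $\Psi$ with Lemma~\ref{lem:DMPAconstruction} supplying the base cases; you merely make explicit the projection step relating the product labeling $\gamma$ to the component labelings $\gamma_i$, which the paper leaves implicit.
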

The proof is by induction over $\Psi$, using Lemma~\ref{lem:DMPAconstruction} for the base cases.

\begin{theorem}
The \LTLfp (resp. \PPLTLp) synthesis problem 
can be decided symbolically via MP games in \textsc{2EXPTIME}~(resp. \textsc{EXPTIME}).
\end{theorem}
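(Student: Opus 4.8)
The plan is to treat correctness and complexity separately, and for the complexity bound to reduce the solution of the \emph{MP} game back to the solution of an \emph{EL} game, reusing Theorem~\ref{thm:ELsolution}. Correctness is essentially already in hand: Proposition~\ref{prop:formulaToDMPA} gives that $\Psi$ is equivalent to $\mathcal{A}'_\Psi$, so by the arena construction from the preliminaries a winning strategy for the system player in the MP game induced by $\mathcal{A}'_\Psi$, starting from $(\iota_1,\ldots,\iota_k)$, is exactly a synthesis strategy all of whose outcomes satisfy $\Psi$, and conversely. Hence the synthesis problem reduces to solving this MP game, and what remains is to bound the game size and the cost of solving it.

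For the size, I would first note that each DFA $(D_i,F_i)$ has at most $2^{2^{\mathcal{O}(|\Phi_i|)}}$ states for \LTLf (resp.\ $2^{\mathcal{O}(|\Phi_i|)}$ for \PPLTL), by the bounds recalled in the preliminaries, and that the minor adjustment in Step~1a (adding or removing $\iota_i$ from $F_i$) changes neither this bound nor the transition structure. Taking the product $D'_\Psi=\prod_{i\in[k]}D_i$ yields a state space of size at most $n'=2^{2^n}$ (resp.\ $n'=2^n$) with $n=|\Psi|$, and the induced arena $A_{D'_\Psi}$ is only polynomially larger. The objective $O'$ uses at most one event per finite trace subformula, so $k\le n$.

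The central step is to eliminate the local events, i.e.\ those appearing only in atoms $\mathsf{F}\,i$ or $\mathsf{G}\,i$, and thereby fall back on a pure EL objective. Concretely I would apply, on each component $D_i$ with $\mathbb{Q}_i\in\{\exists,\forall\}$, precisely the sink transformation of Step~1 in Section~\ref{sec:ELsynthesis}: for $\mathbb{Q}_i=\exists$ make each accepting state an accepting sink, and for $\mathbb{Q}_i=\forall$ make each non-accepting state a non-accepting sink. By Lemma~\ref{lem:localMemory}, on the transformed component the guarantee atom $\mathsf{F}\,i$ and the safety atom $\mathsf{G}\,i$ become equivalent to the recurrence atom $\mathsf{GF}\,i$, since on a sink the relevant event is either seen always or never again. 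Replacing every $\mathsf{F}\,i$ and $\mathsf{G}\,i$ by $\mathsf{GF}\,i$ turns $O'$ into a genuine EL objective without changing the set of winning plays, and, as observed after Step~1 in Section~\ref{sec:ELsynthesis}, the sink transformation does not increase the automaton size. In effect this recovers exactly the EL game of Section~\ref{sec:ELsynthesis}, so the MP game can be solved within the same bound. Invoking Theorem~\ref{thm:ELsolution} on the resulting EL game with $n'$ nodes and $k\le n$ events gives running time $2^{\mathcal{O}(k\log n')}$, which for \LTLfp is $2^{\mathcal{O}(n\cdot 2^n)}=2^{2^{\mathcal{O}(n)}}$, i.e.\ \textsc{2EXPTIME}, and for \PPLTLp is $2^{\mathcal{O}(n^2)}$, i.e.\ \textsc{EXPTIME}.

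The main obstacle I anticipate is justifying that the local-event elimination is correct at the level of \emph{games}, not merely of languages. Since MP objectives are not prefix independent (as stressed after their definition), one must argue that pushing the one-bit memory ``has $\mathsf{F}\,i$ already fired'' / ``has $\mathsf{G}\,i$ already been violated'' into the arena preserves the winner from the designated initial node, and that strategies transfer in both directions. Lemma~\ref{lem:localMemory} is exactly what underwrites this, so the residual work is only to check that the sink states are entered consistently along plays of the induced arena and that the intermediate arena nodes carry empty labels, so as not to spuriously trigger any $\mathsf{F}$- or $\mathsf{G}$-atom.
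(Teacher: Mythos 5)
Your proposal is correct, and its correctness half (reduce synthesis to the MP game via Proposition~\ref{prop:formulaToDMPA} and the induced arena, then bound the arena size by $n'=2^{2^n}$, resp.\ $2^n$) coincides with the paper's. Where you diverge is the game-solving step: you collapse the MP game back into a \emph{single} EL game by applying the sink transformation of Section~\ref{sec:ELsynthesis} to every guarantee/safety component and replacing $\mathsf{F}\,i$ and $\mathsf{G}\,i$ by $\mathsf{GF}\,i$ via Lemma~\ref{lem:localMemory}, then invoke Theorem~\ref{thm:ELsolution}. This is exactly the ``immediate, but costly'' reduction the paper formalizes as $G_1$ (Lemma~\ref{lemma:MPtoELcorrectness}, Corollary~\ref{cor:MPsolution}), yielding time $2^{\mathcal{O}((k+d)\log n')}$ where $k+d\leq n$ counts all quantified subformulas --- which indeed lands in \textsc{2EXPTIME} (resp.\ \textsc{EXPTIME}), so the theorem as stated follows. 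The paper instead proves the theorem by pointing to the DAG-of-EL-games decomposition (Section~\ref{sec:ELreduction2}, Theorem~\ref{thm:MPDAGsolution}), which gives the sharper bound $2^{\mathcal{O}(k\log n')}$ with $k$ counting only recurrence/persistence formulas; that sharper bound is the entire point of introducing MP games, since your route makes the MP detour vacuous (you end up re-deriving the Section~\ref{sec:ELsynthesis} algorithm). So: your proof establishes the stated complexity classes, but not the ``guarantee and safety for free'' improvement the paper attaches to this theorem. Your closing worry about prefix dependence and the labelling of the intermediate arena nodes $(q,x)$ and $(q,x,y)$ is legitimate and is glossed over by the paper as well; the standard fix is to evaluate the objective only on the subsequence of $Q$-nodes (or to copy $\gamma(q)$ onto the intermediate nodes), without which a $\mathsf{G}$-atom would be spuriously falsified.
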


In more detail, consider an input formula $\Psi$
of size $n$ and with $k$ recurrence and persistence formulas, and $d$ guarantuee and safety formulas.
The constructed game over $\mathcal{A}'_\Psi$ is of size at most
$n'=2^{2^n}$ (resp. $n'=2^n$) and has $k$ EL events and $d$ local events.

While the MP automata construction described in Step 1a. is natural and straightforward,
it has the drawback that the resulting automata do not incorporate memory for local events.
A more efficient procedure is obtained by using the automata construction from Section~\ref{sec:ELsynthesis} (Step 1.) that 
adds sink states and incorporates memory for local events at no extra cost (Lemma~\ref{lem:localMemory}).
In what follows, we assume that the latter automata construction is used in the construction of $\mathcal{A}'_\Psi$.
Then the synthesis
game can be solved in time
$2^{\mathcal{O}(k\log{n'})}$ by reduction to a composition of EL games (as described in Section~\ref{sec:ELreduction2} below, see Theorem~\ref{thm:MPDAGsolution}).
This indicates that the proposed synthesis method can handle guarantee and safety properties ``for free''. 
It improves significantly over the reduction to EL games from Section~\ref{sec:ELsynthesis}
which solves the
synthesis problem in time $2^{\mathcal{O}((d+k)\log{n'})}$ (Corollary~\ref{cor:MPsolution}). 


\begin{remark}
For realizable specifications $\Psi$, transducers are obtained by extracting a winning strategy 
for the system player from the MP game induced by $\mathcal{A}'_\Psi$ (see Remark~\ref{rem:MPstrat}).
\end{remark}

\section{Solution of MP Games}\label{sec:solution}

We show how MP games can be reduced (symbolically) to EL games. Such reductions enable the use of the solution algorithm for EL games from \cite{DBLP:conf/fossacs/HausmannLP24}
to symbolically solve MP games.

Fix a Manna-Pnueli game $G=(A,O)$ with arena $A=(V,E)$ and objective $O=(\Gamma,\gamma,\varphi)$,
and assume w.l.o.g. that each atom in $\varphi$ uses a unique event.
Let $G$ have $n=|V|$ nodes, $m$ edges, $k$ EL events, and $d$ local events.

In the first step, equip the arena $A$ with additional memory
to store the local events that have been visited (or violated) in a play so far. 
If $A$ already has memory for the local events, then this step can be skipped. This is the
case in our synthesis algorithm when using the automata construction given in Section~\ref{sec:ELsynthesis}, Step 1.:
 by Lemma~\ref{lem:localMemory},
each state $q$ in $D_\Psi$ comes with memory $L_q$ for the local events.

An immediate, but costly, reduction from MP games to EL games
then is to simply treat all local events as Emerson-Lei events. 
The synthesis algorithm
described in Section~\ref{sec:ELsynthesis} above is based on this reduction.

This first reduction can be improved by instead reducing MP games to directed acyclic graphs (DAGs) of EL games with simplified objectives.
The simplified EL objectives are obtained by partially evaluating the original MP objective according to the local events from the auxiliary memory.
Then the reduced DAG of EL games can be solved by solving the individual EL games in a bottom-up fashion. 
The alternative synthesis algorithm given in Section~\ref{sec:synthesis} leverages this improved reduction.

\smallskip
\noindent\textbf{Arena transformation.}\label{sec:arenaTransformation}
In the transformed arena $A'$, the nodes from the MP game $G$ are annotated
with sets of local events, acting as memory values that keep track of the $\mathsf{F}$-events that have occurred so far 
and the $\mathsf{G}$-events that have occurred in every game step so far.
The moves in $A'$ are the same as in $A$, but update the memory
whenever an $\mathsf{F}$-event occurs for the first time, or a $\mathsf{G}$-event does not occur for the first time.

Let $\Gamma_\mathsf{F}$ and $\Gamma_\mathsf{G}$ denote the sets of $\mathsf{F}$- and $\mathsf{G}$-events in $\varphi$, respectively and 
put $\Gamma_{\mathsf{F},\mathsf{G}}=\Gamma_\mathsf{F}\cup \Gamma_\mathsf{G}$.
 Let $\Gamma_{\mathsf{EL}}$ denote the set of Emerson-Lei events in $\varphi$.

We define a memory update
function $\mathsf{upd}$ that takes as input a game node $v\in V$ and a current memory
value $L\subseteq \Gamma_{\mathsf{F},\mathsf{G}}$,
and computes the set $\mathsf{upd}(v,L)$ obtained from $L$ by removing all
$\mathsf{G}$-events that do not occur at $v$, and adding all 
$\mathsf{F}$-events that occur at $v$.
Formally, we put
\begin{align*}
	\mathsf{upd}(v,L)=(L\cap \gamma(v))\cap \Gamma_\mathsf{G}\cup(L\cup\gamma(v))\cap \Gamma_\mathsf{F}.
\end{align*}

Then the reduced game arena $A'=(V',E')$ incorporating the auxiliary memory is defined by putting
$V'=V\times 2^{\Gamma_\mathsf{F,G}}$ and
$E'(v,L)=E(v)\times\{\mathsf{upd}(v,L)\}$;
nodes $(v,L)\in V'$ are owned by the owner of $v$ in $G$.

The memory update is defined in such a way that (not) visiting a local event once in a play on $A$ corresponds
to eventually forever (not) having the event in the auxiliary memory in a play on $A'$.

We also define a labelling function $\gamma'(v,L)=(\gamma(v)\cap\Gamma_{\mathsf{EL}})\cup\{L\}$, marking
nodes $(v,L)\in V'$ with the union of the Emerson-Lei events of $v$ and the local events from the
auxiliary memory $L$.
\vspace{-10pt}

\paragraph{Reduction to EL Games.}\label{sec:ELreduction1}
Define the EL objective $O_1=(\Gamma,\gamma',\varphi_1)$, where
the EL formula $\varphi_1$ is obtained from $\varphi$ by replacing all atoms of
shape $\mathsf{F}\,a$ or $\mathsf{G}\,a$ in $\varphi$ with $\mathsf{GF}\,a$.
The objective $O_1$ consists of plays for which the EL events visited by the corresponding play in the
original game together with the events from the auxiliary memory satisfy the adapted formula.
We obtain an EL game $G_1=(A',O_1)$ of size $\mathcal{O}(n\cdot 2^d)$ that has $k+d$ EL events.

\begin{lemma}\label{lemma:MPtoELcorrectness}
    The games $G$ and $G_1$ are equivalent.
\end{lemma}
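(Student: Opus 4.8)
The plan is to exhibit a bijection between the plays of $G$ and the plays of $G_1$ that preserves node ownership and converts satisfaction of $\varphi$ into satisfaction of $\varphi_1$, and then to transfer strategies in both directions. The decisive observation is that the memory component added in $A'$ is a \emph{deterministic} function of the play prefix, so it introduces no new choices for either player; consequently a strategy in one game canonically induces a strategy in the other, and the whole argument reduces to a bijection-plus-objective-preservation statement.

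First I would fix the intended initial memory $L_0=\Gamma_\mathsf{G}$ (all $\mathsf{G}$-events present, no $\mathsf{F}$-event recorded yet), reflecting that every $\mathsf{G}$-event has vacuously occurred "so far" while no $\mathsf{F}$-event has. For a fixed start node $v_0$, the projection $V'\to V$ then induces a bijection between plays $\rho=v_0 v_1\ldots$ of $A$ and plays $\rho'=(v_0,L_0)(v_1,L_1)\ldots$ of $A'$, where $L_{i+1}=\mathsf{upd}(v_i,L_i)$. This bijection respects ownership, since $(v,L)$ is owned by the owner of $v$, and respects moves, since $E'(v,L)$ projects bijectively onto $E(v)$. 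The key combinatorial lemma, proved by induction on $i$, is that for each $\mathsf{F}$-event $a$ we have $a\in L_i$ iff $a\in\gamma(v_j)$ for some $j<i$, and for each $\mathsf{G}$-event $a$ we have $a\in L_i$ iff $a\in\gamma(v_j)$ for all $j<i$. Hence the indices carrying a given $\mathsf{F}$-event form an up-set and those carrying a $\mathsf{G}$-event form a down-set, so the memory is eventually constant on every local event. Reading off $\gamma'(v,L)=(\gamma(v)\cap\Gamma_{\mathsf{EL}})\cup L$, a local event $a$ lies in $\gamma'(v_i,L_i)$ iff $a\in L_i$, while an EL event $a$ lies in $\gamma'(v_i,L_i)$ iff $a\in\gamma(v_i)$.

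The crux is the atomwise verification that $\rho\models\varphi$ over $\gamma$ iff $\rho'\models\varphi_1$ over $\gamma'$. For an EL event the atom $\mathsf{GF}\,a$ or $\mathsf{FG}\,a$ is left unchanged by the rewriting and depends only on the EL-labels, which the two plays share, so it evaluates identically. For an $\mathsf{F}$-event, monotonicity of the up-set gives that $\rho'\models\mathsf{GF}\,a$ iff $a\in L_i$ for some $i$ iff $a$ occurs in $\rho$, i.e.\ iff $\rho\models\mathsf{F}\,a$. For a $\mathsf{G}$-event, monotonicity of the down-set gives that $\rho'\models\mathsf{GF}\,a$ iff $a\in L_i$ for all $i$ iff $a$ occurs at every node of $\rho$, i.e.\ iff $\rho\models\mathsf{G}\,a$. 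Since $\varphi_1$ is obtained from $\varphi$ by replacing exactly these local atoms with $\mathsf{GF}$ and leaving the Boolean structure intact, a routine structural induction lifts the atomwise equivalences to $\varphi$ and $\varphi_1$. This is the main obstacle, and the place where the prefix-dependence of the MP objective is absorbed into the finite memory: one must get the update convention and the initial memory exactly right so that "visited once'' is faithfully turned into "present infinitely often'' for each kind of local event.

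Finally I would assemble the equivalence by strategy transfer. Given a system-player strategy winning $v_0$ in $G$, define the induced strategy from $(v_0,L_0)$ in $G_1$ by mirroring each move through the projection (well defined since $E'(v,L)\to E(v)$ is a bijection); every compatible play in $G_1$ projects to a compatible winning play in $G$ and hence, by the atomwise equivalence, is itself winning. The converse direction projects a winning strategy in $G_1$ back to $G$ verbatim. Thus the system player wins $v_0$ in $G$ iff it wins $(v_0,L_0)$ in $G_1$, and by determinacy of MP and EL games the same correspondence holds for the environment player, which is the asserted equivalence of $G$ and $G_1$.
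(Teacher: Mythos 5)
Your proposal is correct and follows essentially the same route as the paper's (much terser) proof sketch: the paper's key observation is precisely that visiting (violating) a local event once in $G$ corresponds to eventually having (not having) that event in the auxiliary memory forever in $G_1$, so that $\mathsf{F}\,a$ and $\mathsf{G}\,a$ translate to $\mathsf{GF}\,a$, which is exactly your up-set/down-set lemma plus atomwise verification. Your additional details — the choice of initial memory $\Gamma_\mathsf{G}$, the induction on the update function, and the strategy transfer via the deterministic-memory bijection — simply flesh out what the paper leaves implicit.
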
	

\begin{proof}[Sketch]
The claim follows by showing that (not) visiting a local event \emph{once} in $G$ corresponds
to eventually (not) having this event in the auxiliary memory \emph{forever} in $G_1$ (similar to Lemma~\ref{lem:localMemory}). Thus atoms
$\mathsf{F}\, a$ and $\mathsf{G}\, a$ in $G$ correspond to atoms $\mathsf{GF}\,a$ in $G_1$. In fact,
we could also choose $\mathsf{FG}\,a$ to represent local atoms in $G_1$.
\end{proof}

Using Theorem~\ref{thm:ELsolution}, we thus obtain

\begin{corollary}\label{cor:MPsolution}
Manna-Pnueli games with $n$ nodes, $k$ Emerson-Lei events and $d$ local events can be solved in time $\mathcal{O}((k+d)!\cdot (n\cdot 2^d)^{k+d+2})\in 2^{\mathcal{O}(d(k+d)\log{n})}$;
winning strategies require at most $(k+d)!$ memory values.
\end{corollary}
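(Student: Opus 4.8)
The plan is to prove the corollary by a direct instantiation of Theorem~\ref{thm:ELsolution}, reusing the reduction already set up for Lemma~\ref{lemma:MPtoELcorrectness}. Starting from a Manna-Pnueli game $G$ with $n$ nodes, $k$ Emerson-Lei events and $d$ local events, I would first pass to the transformed arena $A'=(V',E')$ with $V'=V\times 2^{\Gamma_{\mathsf{F},\mathsf{G}}}$, which annotates each node with a memory value recording the local events seen or violated so far. Since there are exactly $d$ local events, $|V'|=n\cdot 2^d$. On $A'$ we take the EL game $G_1=(A',O_1)$ in which every local atom $\mathsf{F}\,a$ or $\mathsf{G}\,a$ has been rewritten to $\mathsf{GF}\,a$, so that $O_1$ has $k+d$ Emerson-Lei events. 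Lemma~\ref{lemma:MPtoELcorrectness} gives that $G$ and $G_1$ have the same winner at corresponding nodes, so solving $G$ amounts to building $A'$ and solving $G_1$.

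It then remains to feed the parameters of $G_1$ into Theorem~\ref{thm:ELsolution}. Instantiating the EL bound $\mathcal{O}(K!\cdot N^{K+2})$ with $N=n\cdot 2^d$ and $K=k+d$ yields $\mathcal{O}((k+d)!\cdot(n\cdot 2^d)^{k+d+2})$ for the running time; the cost $\mathcal{O}(n\cdot 2^d)$ of constructing $A'$ is absorbed. Likewise the EL strategy bound of $K!$ memory values specializes to $(k+d)!$: the solver returns a strategy on $A'$ that uses $(k+d)!$ Zielonka-tree memory values (cf. Remark~\ref{rem:ELstrat}), while the local-event component is carried by the arena $A'$ and maintained deterministically by $\mathsf{upd}$, so it contributes no additional branching memory.

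To obtain the closed form $2^{\mathcal{O}(d(k+d)\log n)}$ I would bound the exponent. Using $(k+d)!\leq (k+d)^{k+d}$ together with the standing assumption $|\Gamma|\leq|Q|$, i.e. $k+d\leq n$, gives $\log((k+d)!)\leq (k+d)\log n$, while $(n\cdot 2^d)^{k+d+2}=2^{(k+d+2)(\log n+d)}$. Summing and collecting terms, every summand is dominated by $d(k+d)\log n$ once $d\geq 1$ and $n\geq 2$, which is the claimed bound (for $d=0$ the game is already an EL game, and one reads off $2^{\mathcal{O}(k\log n)}$ from Theorem~\ref{thm:ELsolution} directly).

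The only genuinely delicate point is the memory claim, where one must justify that the $2^d$ local-memory annotation does not inflate the strategy's memory beyond $(k+d)!$. The resolution, exactly as in the sink-state construction of Section~\ref{sec:ELsynthesis} (Lemma~\ref{lem:localMemory}), is that the memory value $L$ is a deterministic function of the play prefix and hence belongs to the game graph $A'$ rather than to the strategy's choice-relevant memory; the branching memory is precisely the $(k+d)!$ supplied by the EL solution. Everything else is routine substitution into Theorem~\ref{thm:ELsolution}.
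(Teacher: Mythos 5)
Your proposal is correct and follows exactly the route the paper intends for this corollary: apply the arena transformation to get $G_1=(A',O_1)$ with $n\cdot 2^d$ nodes and $k+d$ Emerson-Lei events, invoke Lemma~\ref{lemma:MPtoELcorrectness} for equivalence, and instantiate Theorem~\ref{thm:ELsolution} with $N=n\cdot 2^d$ and $K=k+d$. Your explicit justification of the $2^{\mathcal{O}(d(k+d)\log n)}$ exponent and of the memory count (treating the deterministic local-event annotation as part of the arena rather than strategy memory, consistent with the paper's own accounting) only makes explicit what the paper leaves implicit.
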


The above bound on solution time improves to
$\mathcal{O}((k+d)!\cdot n^{k+d+2})\in 2^{\mathcal{O}((k+d)\log{n})}$
for games over arenas that have memory for local events (such as the arenas
constructed in Sections~\ref{sec:ELsynthesis} and~\ref{sec:synthesis}). In such cases, 
the arena transformation is not required so that $G_1$ has just $n$ nodes and $k+d$ EL events.
\vspace{-20pt}

\paragraph{Reduction to Compositions of EL Games.}\label{sec:ELreduction2}

Whenever the memory value $L$ changes by a move in the arena $A'$, at least one
$\mathsf{G}$-event is permanently removed from $L$, or at least one $\mathsf{F}$-event is permanently added to $L$. As there are finitely many events, the memory changes finitely often. Consequently, the memory values partition the arena $A'$ into $2^{|\Gamma_\mathsf{F,G}|}$ subarenas; 
we denote the subarena with the memory values fixed to $L$ by $A'_L$.
Together with the memory changing moves, the partition of
$A'$ into the arenas $A'_L$ forms a DAG with top and bottom subarenas $A'_{\Gamma_{\mathsf{G}}}$ and $A'_{\Gamma_{\mathsf{F}}}$, respectively. 

Given an MP formula $\psi$ and a set $L\subseteq \Gamma_{\mathsf{F},\mathsf{G}}$ of local events, 
we let $\psi_{L}$ denote the formula that is obtained
from $\psi$ by evaluating $\mathsf{F}$- and $\mathsf{G}$-atoms according
to $L$, that is by replacing
atoms $\mathsf{F}\, a$ and $\mathsf{G}\, a$ with $\top$ if $a\in L$, and 
with $\bot$ otherwise.
We point out that $\psi_{L}$ does not contain any $\mathsf{F}$- and $\mathsf{G}$-atoms and hence is an EL formula. 

Given a play $\pi\in\mathsf{plays}(A)$,
we define the set of $\mathsf{F}$- and
$\mathsf{G}$-events that are satisfied by $\pi$ by
\begin{align*}
\Gamma_\mathsf{F,G}(\pi)&=\{c\in \Gamma_{\mathsf{F}}\mid
\pi\models \mathsf{F}\, c\} \cup \{c\in \Gamma_{\mathsf{G}}\mid
\pi\models \mathsf{G}\, c\}.
\end{align*}

The following lemma relates MP objectives to
(simplified) EL objectives; the proof is immediate.
\begin{lemma}\label{lem:restr}
For all plays $\pi\in\mathsf{plays}(A)$ we have
$\pi\models \varphi$ if and only if 
$\pi\models\varphi_{\Gamma_\mathsf{F,G}(\pi)}$.
\end{lemma}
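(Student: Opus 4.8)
The plan is to prove Lemma~\ref{lem:restr} by establishing the two directions of the biconditional, where the crucial observation is that both formulas $\varphi$ and $\varphi_{\Gamma_\mathsf{F,G}(\pi)}$ are positive Boolean combinations of atoms, differing only in how the $\mathsf{F}$- and $\mathsf{G}$-atoms are treated. First I would fix a play $\pi\in\mathsf{plays}(A)$ and recall that $\varphi_{\Gamma_\mathsf{F,G}(\pi)}$ is obtained from $\varphi$ by replacing each atom $\mathsf{F}\,a$ or $\mathsf{G}\,a$ with $\top$ if $a\in \Gamma_\mathsf{F,G}(\pi)$ and with $\bot$ otherwise, while leaving all Emerson-Lei atoms ($\mathsf{GF}\,a$, $\mathsf{FG}\,a$) untouched. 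Since the Emerson-Lei atoms are identical in both formulas and are evaluated against the same play $\pi$, they contribute the same truth values on both sides. Thus it suffices to argue that each local atom of $\varphi$ has the same truth value under $\pi$ as its replacement in $\varphi_{\Gamma_\mathsf{F,G}(\pi)}$.

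The key step is therefore purely a matter of unfolding the definition of $\Gamma_\mathsf{F,G}(\pi)$. By definition, for an $\mathsf{F}$-event $c\in\Gamma_\mathsf{F}$ we have $c\in\Gamma_\mathsf{F,G}(\pi)$ exactly when $\pi\models\mathsf{F}\,c$, and for a $\mathsf{G}$-event $c\in\Gamma_\mathsf{G}$ we have $c\in\Gamma_\mathsf{F,G}(\pi)$ exactly when $\pi\models\mathsf{G}\,c$. Hence an atom $\mathsf{F}\,a$ of $\varphi$ is satisfied by $\pi$ if and only if $a\in\Gamma_\mathsf{F,G}(\pi)$, which is precisely the condition under which $\mathsf{F}\,a$ is replaced by $\top$ (and otherwise by $\bot$); the same reasoning applies verbatim to $\mathsf{G}\,a$. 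So each local atom and its substituted Boolean constant agree in truth value under $\pi$.

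To conclude, I would invoke a routine structural induction on the formula $\varphi$: the base cases are the atoms, where Emerson-Lei atoms agree trivially and local atoms agree by the observation above; the inductive cases for $\land$ and $\lor$ follow immediately since the truth value of a Boolean combination depends only on the truth values of its constituents, which coincide by the induction hypothesis. Replacing each atom by a constant (or an identical atom) with the same truth value under $\pi$ therefore preserves the overall truth value, giving $\pi\models\varphi$ if and only if $\pi\models\varphi_{\Gamma_\mathsf{F,G}(\pi)}$.

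This argument is genuinely immediate, as the statement already asserts, so there is no real obstacle; the only point requiring any care is keeping the bookkeeping straight between the two event types, namely that $\mathsf{F}$-events are tested via $\mathsf{F}$ and $\mathsf{G}$-events via $\mathsf{G}$ in the definition of $\Gamma_\mathsf{F,G}(\pi)$, so that each substitution matches the semantics of the atom it replaces. Since $\varphi$ is a \emph{positive} Boolean combination the induction does not even need to track negations, which further streamlines the argument.
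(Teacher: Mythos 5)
Your proof is correct and matches the paper's intent exactly: the paper declares this lemma ``immediate,'' and your argument is precisely the definitional unfolding that makes it so---each local atom is replaced by a Boolean constant with the same truth value under $\pi$, EL atoms are untouched, and a structural induction over the Boolean combination finishes. No issues; your remark about positivity is a harmless aside since the substitution argument works for arbitrary Boolean combinations anyway.
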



Based on Lemma~\ref{lem:restr}, we simplify the objective formula $\varphi_1$ from $G_1$ by using,
for each subarena $A'_L$, the simpler objective $\varphi_L$. 
Formally, we define
$\varphi_2=\textstyle\bigvee_{L\in 2^{\Gamma_\mathsf{F,G}}}(\textstyle\bigwedge_{a\in L} \mathsf{Inf} \,a\land \varphi_{L})$.
This formula expresses the existence of some memory value $L$ such that the auxiliary memory eventually stabilizes to $L$ and
such that $\varphi_{L}$ is satisfied. Plays satisfying this formula eventually stay within some subarena $A'_L$ forever and 
satisfy $\varphi_{L}$.
We put $O_2=(\Gamma,\gamma',\varphi_2)$
and let $G_2$ denote the EL game $(A',O_2)$.

\begin{example}
We demonstrate the reduction of MP games to 
DAGs of EL games for the game from Example~\ref{ex:MP} (not depicting labelings with events for readability).
The objectives of the individual subgames are as follows.
\begin{align*}
\varphi_{\emptyset}&=\varphi[\mathsf{G}\, d\mapsto \bot,\mathsf{F}\, c\mapsto \bot]=\bot\\
\varphi_{\{c\}}&=\varphi[\mathsf{G}\, d\mapsto \bot,\mathsf{F}\, c\mapsto \top]=\mathsf{FG}\,\neg b\\
\varphi_{\{d\}}&=\varphi[\mathsf{G}\, d\mapsto \top,\mathsf{F}\, c\mapsto \bot]=\mathsf{GF}\,a\land\mathsf{GF}\,b\\
\varphi_{\{c,d\}}&=\varphi[\mathsf{G}\, d\mapsto \top,\mathsf{F}\, c\mapsto \top]=(\mathsf{GF}\,a\land\mathsf{GF}\,b)\lor\mathsf{FG}\,\neg b
\end{align*}
    
    \begin{center}
\scalebox{0.6}{
\tikzset{every state/.style={minimum size=15pt}, every node/.style={minimum size=15pt}}
    \begin{tikzpicture}[
		auto,
    node distance=1.5cm,
    thick
    ]
    \tikzstyle{every state}=[
        draw = black,
        thick,
        fill = white,
        minimum size = 4mm
    ]
    
     \node[state, label={}] (1a)  {};
     \node[state, rectangle,label={}] [left of=1a] (2a) {};
     \node[state,label={}] [below of=1a] (3a) {};
     \node[state, rectangle,label={}] [right of=1] (4a) {};
     \node[state,label={}] [below of=4a] (5a)  {};
     \node[state,label={}] [right of=5a] (6a)  {};

     \node[label={$G_{\{d\}}$}] (0a) at (-2.7,0) {};

     \path[->] (1a) edge node {} (2a);

     \path[->,dotted] (1a) edge[bend left=15] node {} (3a);
     \path[->] (3a) edge[bend left=15] node {} (1a);

     \path[->,dotted] (1a) edge[bend left=15] node {} (4a);
     \path[->] (4a) edge[bend left=15] node {} (1a);

     \path[->] (4a) edge node {} (5a);

     \node[state, label={}] (1b) at (7,0) {};
     \node[state, rectangle,label={}] [left of=1b] (2b) {};
     \node[state,label={}] [below of=1b] (3b) {};
     \node[state, rectangle,label={}] [right of=1b] (4b) {};
     \node[state,label={}] [below of=4b] (5b)  {};
     \node[state,label={}] [right of=5b] (6b)  {};

     \node[label={$G_{\emptyset}$}] (0b) at (4.3,0) {};

     \path[->,dotted] (1b) edge node {} (2b);

     \path[->] (1b) edge[bend left=15] node {} (3b);
     \path[->] (3b) edge[bend left=15] node {} (1b);

     \path[->] (1b) edge[bend left=15] node {} (4b);
     \path[->] (4b) edge[bend left=15] node {} (1b);

     \path[->] (4b) edge[bend left=15] node {} (5b);
     \path[->] (5b) edge[bend left=15] node {} (4b);

     \path[->,dotted] (5b) edge node {} (6b);

     \node[state, label={}] (1c) at (0,-3.5) {};
     \node[state, rectangle,label={}] [left of=1c] (2c) {};
     \node[state,label={}] [below of=1c] (3c) {};
     \node[state, rectangle,label={}] [right of=1c] (4c) {};
     \node[state,label={}] [below of=4c] (5c)  {};
     \node[state,label={}] [right of=5c] (6c)  {};

     \node[label={$G_{\{c,d\}}$}] (0c) at (-2.7,-3.5) {};

     \path[->] (1c) edge[bend left=15] node {} (2c);
     \path[->] (2c) edge[bend left=15] node {} (1c);

     \path[->,dotted] (1c) edge[bend left=15] node {} (3c);
     \path[->] (3c) edge[bend left=15] node {} (1c);

     \path[->] (1c) edge[bend left=15] node {} (4c);
     \path[->] (4c) edge[bend left=15] node {} (1c);

     \path[->] (4c) edge node {} (5c);
     
     \path[->] (6c) edge node {} (5c);

     \path[->] (2c) edge[bend left=30] node {} (4c);

     \node[state, label={}] (1d) at (7,-3.5) {};
     \node[state, rectangle,label={}] [left of=1d] (2d) {};
     \node[state,label={}] [below of=1d] (3d) {};
     \node[state, rectangle,label={}] [right of=1d] (4d) {};
     \node[state,label={}] [below of=4d] (5d)  {};
     \node[state,label={}] [right of=5d] (6d)  {};

     \node[label={$G_{\{c\}}$}] (0d) at (4.3,-3.5) {};

     \path[->] (1d) edge[bend left=15] node {} (2d);
     \path[->] (2d) edge[bend left=15] node {} (1d);

     \path[->,dotted] (1d) edge[bend left=15] node {} (3d);
     \path[->] (3d) edge[bend left=15] node {} (1d);

     \path[->] (1d) edge[bend left=15] node {} (4d);
     \path[->] (4d) edge[bend left=15] node {} (1d);

     \path[->] (4d) edge[bend left=15] node {} (5d);
     \path[->] (5d) edge[bend left=15] node {} (4d);

     \path[->,dotted] (5d) edge[bend left=15] node {} (6d);
     \path[->] (6d) edge[bend left=15] node {} (5d);

     \path[->] (2d) edge[bend left=30] node {} (4d);

     \path[->,dashed] (2a) edge node[right, pos=0.715] {$c$} (2c);
     \path[->,dashed] (6a) edge node[right, pos=0.23] {$c$} (6c);
     \path[->,dashed] (5a) edge[bend right=15] node[above, pos=0.35] {$\neg d$} (5b);
     \path[->,dashed] (2b) edge node[right, pos=0.715] {$c$} (2d);
     \path[->,dashed] (6b) edge node[right, pos=0.23] {$c$} (6d);
     \path[->,dashed] (5c) edge[bend right=15] node[above, pos=0.35] {$\neg d$} (5d);

  \begin{pgfonlayer}{background}
    \filldraw [line width=4mm,join=round,black!10]
      (-1.9,-2.0)  rectangle (3.3,0.7);
  \end{pgfonlayer}

  \begin{pgfonlayer}{background}
    \filldraw [line width=4mm,join=round,black!10]
      (5.1,-2.0)  rectangle (10.3,0.7);
  \end{pgfonlayer}

  \begin{pgfonlayer}{background}
    \filldraw [line width=4mm,join=round,black!10]
      (-1.9,-5.5)  rectangle (3.3,-2.8);
  \end{pgfonlayer}

  \begin{pgfonlayer}{background}
    \filldraw [line width=4mm,join=round,black!10]
      (5.1,-5.5)  rectangle (10.3,-2.8);
  \end{pgfonlayer}

  \end{tikzpicture}
  }

\end{center}

Dashed edges depict changes in the memory for local events (corresponding to first-time visits of event $c$ or $\neg d$, respectively);
they descend in the DAG. Dotted edges depict winning strategies for each subgame. An overall winning strategy is obtained
by using additional memory to keep track of the current subgame and by always moving according to the winning strategy for the current subgame.
\end{example}

\begin{lemma}\label{lemma:MP-to-EL-correctness}
    The games $G$ and $G_2$ are equivalent.
\end{lemma}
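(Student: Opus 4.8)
The plan is to factor the argument through the intermediate game $G_1$. By Lemma~\ref{lemma:MPtoELcorrectness} the games $G$ and $G_1$ are equivalent, so it suffices to prove that $G_1$ and $G_2$ are equivalent. The point I would exploit is that $G_1=(A',O_1)$ and $G_2=(A',O_2)$ are played on the \emph{same} arena $A'$ with the \emph{same} labelling $\gamma'$, differing only in their EL formulas $\varphi_1$ and $\varphi_2$. Since, for a fixed arena with fixed ownership, a game is determined by the set of plays its objective admits, it is enough to show that $\varphi_1$ and $\varphi_2$ are satisfied by exactly the same plays of $A'$; the winning regions then coincide on $A'$, and composing with $G\equiv G_1$ yields $G\equiv G_2$. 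I would therefore fix an arbitrary play $\pi'\in\mathsf{plays}(A')$ and compare the truth values of $\varphi_1$ and $\varphi_2$ along $\pi'$.

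The structural fact driving the comparison is that the memory component stabilises. As observed after the definition of $\mathsf{upd}$, along any play the $\mathsf{G}$-part of the memory can only shrink and the $\mathsf{F}$-part can only grow; as $\Gamma_\mathsf{F,G}$ is finite, the memory of $\pi'$ is eventually constant, equal to some $L^{\ast}\subseteq\Gamma_\mathsf{F,G}$, and $\pi'$ eventually stays inside the subarena $A'_{L^{\ast}}$. Because $\gamma'(v,L)$ contains a local event $a$ precisely when $a\in L$, this gives, for every local event $a$, that $a$ occurs infinitely often along $\pi'$ iff $a$ occurs eventually-always iff $a\in L^{\ast}$ (mirroring Lemma~\ref{lem:localMemory}), while the EL events seen along $\pi'$ depend only on the underlying $A$-play. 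Evaluating $\varphi_1$, which arises from $\varphi$ by turning each local atom $\mathsf{F}\,a,\mathsf{G}\,a$ into $\mathsf{GF}\,a$, then assigns every local atom the value true exactly when $a\in L^{\ast}$, so $\pi'\models\varphi_1$ iff $\pi'\models\varphi_{L^{\ast}}$ as EL formulas. For $\varphi_2$, the guard $\bigwedge_{a\in L}\mathsf{Inf}\,a$ forces every $a\in L$ to recur, hence holds along $\pi'$ iff $L\subseteq L^{\ast}$; consequently $\pi'\models\varphi_2$ iff $\pi'\models\varphi_L$ for some $L\subseteq L^{\ast}$.

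It then remains to collapse this disjunction, which is where the positivity of $\varphi$ does the work: since $\varphi$ is a \emph{positive} Boolean combination of its atoms and $\varphi_L$ sends each local atom to $\top$ precisely when its event lies in $L$, enlarging $L$ can only make atoms true, so $L\subseteq L'$ implies $\varphi_L\Rightarrow\varphi_{L'}$ on every play. Hence $\bigvee_{L\subseteq L^{\ast}}\varphi_L$ is equivalent to its largest disjunct $\varphi_{L^{\ast}}$, and we get $\pi'\models\varphi_2$ iff $\pi'\models\varphi_{L^{\ast}}$ iff $\pi'\models\varphi_1$. As $\pi'$ was arbitrary, $\varphi_1$ and $\varphi_2$ admit the same plays of $A'$, completing the reduction. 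I expect the main obstacle to be precisely the mismatch between the intended reading of $\varphi_2$ (``the memory stabilises \emph{to} $L$'', i.e.\ $L=L^{\ast}$) and what its guard literally expresses (only $L\subseteq L^{\ast}$): the monotonicity argument of the last step is exactly what reconciles the two, so the care needed is in (i) nailing down memory stabilisation and the identification ``$a$ recurs along $\pi'$ iff $a\in L^{\ast}$'' for local events, and (ii) verifying that positivity of $\varphi$ makes the weaker guard harmless. (As a cross-check, the same conclusion can be reached directly from Lemma~\ref{lem:restr} by identifying $L^{\ast}$ with $\Gamma_\mathsf{F,G}(\pi)$ for the projected play $\pi$, but routing through $G_1$ avoids proving that identification separately.)
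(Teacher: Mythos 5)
Your proof is correct, but it takes a genuinely different route from the paper's. The paper argues directly between $G$ and $G_2$ by strategy transfer: a winning strategy for $G$ is replayed in $G_2$ ignoring the memory, and the resulting plays stabilise in some subgame $G_L$ and win by Lemma~\ref{lem:restr}; conversely a winning strategy for $G_2$ is decomposed into per-$L$ strategies that the system player follows in $G$ according to the local events seen so far, again invoking Lemma~\ref{lem:restr}. You instead factor through $G_1$ via Lemma~\ref{lemma:MPtoELcorrectness} and reduce everything to a purely semantic statement on the common arena $A'$: that $\varphi_1$ and $\varphi_2$ admit exactly the same plays. This buys you two things. First, it avoids re-doing any strategy bookkeeping, since two objectives with the same play set on the same arena trivially have the same winning regions. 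Second, and more substantively, it forces you to confront a point the paper's sketch glosses over: the guard $\bigwedge_{a\in L}\mathsf{Inf}\,a$ in $\varphi_2$ only expresses $L\subseteq L^{\ast}$, not stabilisation \emph{to} $L$ as the surrounding prose claims, and your monotonicity argument (positivity of $\varphi$ gives $\varphi_L\Rightarrow\varphi_{L'}$ for $L\subseteq L'$, collapsing the disjunction to $\varphi_{L^{\ast}}$) is exactly what is needed to close that gap --- the same implication is tacitly needed in the converse direction of the paper's own argument. The only caveat is that your route inherits whatever notion of ``equivalence'' Lemma~\ref{lemma:MPtoELcorrectness} establishes between the differently-shaped arenas $A$ and $A'$ (a node $v$ versus its memory-annotated copies), but since $G_1$ and $G_2$ have identical winning regions on $A'$, that correspondence transfers verbatim.
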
	
\begin{proof}[Sketch]
For one direction, use a winning strategy for $G$ 
to play in $G_2$, simply ignoring memory values. By construction, 
the resulting plays in $G_2$ eventually stay within one subgame
$G_L$ and are winning by Lemma~\ref{lem:restr}.
For the converse direction, a winning strategy for
$G_2$ provides, for each set $L$ of 
local events, a strategy to play in the subgame $G_L$.
In $G$, let system player always follow the strategy for 
the subgame that corresponds to the local events visited so far.
The resulting plays in $G$ are again winning by Lemma~\ref{lem:restr}. 
\end{proof}

\begin{remark}[Strategy extraction]\label{rem:MPstrat}
For each EL subgame $G_L$, construct a winning strategy $\sigma_L$ (according to Remark~\ref{rem:ELstrat}).
An overall strategy $\sigma$ for $G$ uses auxiliary memory $2^{\Gamma_{\mathsf{F},\mathsf{G}}}$ to keep track of the local events that have
been satisfied / violated so far. This memory identifies, at each point, a subgame $G_L$. 
Define $\sigma$ to always play according to $\sigma_L$, where $L$ is
the current content of the auxiliary memory.
\end{remark}

Due to its particular DAG structure, the game $G_2$ can be solved by solving the subgames $G_L$ (played over arena $A'_L$ with objective $\varphi_L$) individually, starting from the bottom game $G_{\Gamma_\mathsf{F}}$. Once a subgame $G_L$ has been solved, all edges in the remaining subgames that lead to a node $v\in G_L$ are marked as winning or losing, depending on whether
$v$ is winning or losing in $G_L$. Then $G_L$ is removed from the DAG and one of the remaining bottom subgames is picked and solved in turn.
Thus $G_2$ can be solved by solving at most $2^d$ EL games,
each of which has at most $n$ nodes, $m$ edges, and an objective with at most $k$ EL events.
The overall time complexity of solving $G_2$ in this way is in $\mathcal{O}(2^d\cdot m\cdot k!\cdot n^k)$. Note that we have $m\leq n^2$.

\begin{theorem}\label{thm:MPDAGsolution}
Manna-Pnueli games with $n$ nodes, $k$ Emerson-Lei events and $d$ local events can be solved in time $\mathcal{O}(k!\cdot n^{k+2}\cdot 2^d)\in 2^{\mathcal{O}(d+k\log{n})}$;
winning strategies require at most $k!\cdot 2^d$ memory values.
\end{theorem}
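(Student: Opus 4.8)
The plan is to reduce, via Lemma~\ref{lemma:MP-to-EL-correctness}, the solution of $G$ to that of the EL game $G_2=(A',O_2)$, and then to exploit the DAG structure of $A'$ so as to avoid ever solving $G_2$ as one monolithic EL game with $k+d$ events (which would only yield the weaker bound of Corollary~\ref{cor:MPsolution}). Instead, I would solve the at most $2^d$ subgames $G_L$ -- each played over the subarena $A'_L$ with the simplified objective $\varphi_L$, which is an EL formula over only the $k$ Emerson-Lei events -- in a bottom-up topological order, and glue the results together.

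First I would make the DAG structure precise. Since $\mathsf{upd}$ only ever removes $\mathsf{G}$-events from, and only ever adds $\mathsf{F}$-events to, the memory, the potential $\mu(L)=|L\cap\Gamma_\mathsf{G}|+(|\Gamma_\mathsf{F}|-|L\cap\Gamma_\mathsf{F}|)$ strictly decreases on every memory-changing move (and is unchanged on memory-preserving moves). Hence there are no cycles between distinct subarenas, every descending chain has length at most $d$, the top $A'_{\Gamma_\mathsf{G}}$ has $\mu=d$, and the bottom $A'_{\Gamma_\mathsf{F}}$ has $\mu=0$ and no outgoing memory-changing moves. This well-foundedness is what licenses the inductive decomposition.

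Next comes the core step: processing subarenas in increasing order of $\mu$, I would show that once all subarenas strictly below $A'_L$ have been solved, the winner of every node of $A'_L$ in $G_2$ can be computed by solving the standalone EL game on $A'_L$ with objective $\varphi_L$, in which each memory-changing edge leaving $A'_L$ is marked as immediately winning or losing for the system player according to the already-determined status of its target node in the lower subgame. Correctness rests on two observations: any $G_2$-play starting in $A'_L$ either stays in $A'_L$ forever -- in which case the memory is constantly $L$ and, by Lemma~\ref{lem:restr}, the play is judged exactly by $\varphi_L$ -- or it eventually takes a memory-changing move into a strictly lower subarena, from which point its outcome is fixed by the inductively computed solution and is faithfully recorded by the mark on that edge. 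The bookkeeping of this gluing -- verifying that marking the exit edges neither creates nor destroys winning plays, and that the per-subarena winning regions assemble into the winning region of $G_2$ -- is the step I expect to require the most care, since it is where determinacy, the within-subarena prefix-independence of $\varphi_L$, and the inductive hypothesis must be combined.

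Finally I would read off the bounds. Each of the at most $2^d$ subgames has at most $n$ nodes, $m\le n^2$ edges, and an objective over $k$ EL events, so by Theorem~\ref{thm:ELsolution} each is solved in $\mathcal{O}(k!\cdot n^{k+2})$ time; the marking of exit edges is cheap, giving total time $\mathcal{O}(k!\cdot n^{k+2}\cdot 2^d)$, which lies in $2^{\mathcal{O}(d+k\log n)}$ because $k\le k+d=|\Gamma|\le n$ forces $k\log k\le k\log n$. For the strategy bound, each subgame strategy $\sigma_L$ uses at most $k!$ memory values by Theorem~\ref{thm:ELsolution}; the overall strategy additionally tracks the current subarena (one of $2^d$ values), exactly as in Remark~\ref{rem:MPstrat}, for a total of at most $k!\cdot 2^d$ memory values.
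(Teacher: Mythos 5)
Your proposal is correct and follows essentially the same route as the paper: reduce $G$ to $G_2$ via Lemma~\ref{lemma:MP-to-EL-correctness}, exploit the DAG ordering on memory values to solve the at most $2^d$ subgames $G_L$ bottom-up with exit edges marked winning/losing from already-solved lower subgames, invoke Theorem~\ref{thm:ELsolution} on each $k$-event subgame, and combine the per-subgame strategies with $2^d$ extra memory as in Remark~\ref{rem:MPstrat}. Your explicit potential function $\mu$ is merely a formalization of the paper's observation that each memory change permanently removes a $\mathsf{G}$-event or permanently adds an $\mathsf{F}$-event.
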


The above bounds improve to a solution time $\mathcal{O}(m\cdot k!\cdot n^k)\in 2^{\mathcal{O}(k\log n)}$ and
strategy size $k!$ for games over arenas that have memory for local events. Thus the solution complexity for MP games with memory for local events
is the same as for EL games.

The dependency on the number of events in Theorem~\ref{thm:ELsolution}~(and in Corollary~\ref{cor:MPsolution}) is
factorial while the dependency on the number of local events in Theorem~\ref{thm:MPDAGsolution} is exponential.

\begin{figure}[t]
\begin{footnotesize}
    \centering
\begin{tikzpicture}[
    show background rectangle]

    \node[module] (I1) {\textrm{\LTLfp/\PPLTLp formula} $\Psi$};
    \node[module, below=4mm of I1] (I2) {\textrm{Boolean formula over} $\mathbb{Q}_i\Phi_i$};
    \node[below=6.5mm of I2] (yo) {};
    \node[module, left= 8mm of yo] (I4) {\textrm{EL game} $\mathcal{A}_\psi$};
    \node[module, right= 8mm of yo] (I3) {\textrm{MP game} $\mathcal{A}'_\psi$};
    \node[module, below=4mm of I3] (I5) {\textrm{DAG of EL games}};
    \node[module, below=28mm of I2] (I7) {\textrm{output: ``(un)realizable'', strategy}};

    \path[->] (I1) edge node[right, pos=0.55] {parser} (I2);
    \path[->] (I2) edge node[right, pos=0.45] {\,\,\,\,Sect.~\ref{sec:synthesis}} (I3);
    \path[->] (I2) edge node[left, pos=0.45] {Sect.~\ref{sec:ELsynthesis}\,\,\,} (I4);
    \path[->] (I3) edge node[right, pos=0.4] {\,Sect.~\ref{sec:ELreduction2}} (I5);
    \path[->] (I4) edge node[left, pos=0.5] {Thm.~\ref{thm:ELsolution}} (I7);
    \path[->] (I5) edge node[right, pos=0.5] {\,\,Thm.~\ref{thm:MPDAGsolution}} (I7);
    \path[->, dashed] (I5) edge[bend left=10] node[below] {} (I4);
    \path[->, dashed] (I4) edge[bend left=10] node[below, pos=0.3] {Thm.~\ref{thm:ELsolution}\,} (I5);



    \path (I1.west)--node[above=4mm, font=\sffamily\bfseries] (arc) {\tool} (I1.west);

\end{tikzpicture}
    \caption{Overall structure of the implementation}
    \label{fig:impl}
\end{footnotesize}
\end{figure}
\section{Implementation}\label{sec:impl}
We implemented a prototype \tool~\cite{lydiasyftplus}~(see Figure~\ref{fig:impl}) that realizes both our method and the existing approach based on a reduction to Emerson-Lei games~as described in Section~\ref{sec:ELsynthesis}. While \tool is the first implementation of an \LTLfp/\PPLTLp synthesizer, it also enables a direct empirical comparison between the two approaches. 

\smallskip
\noindent\textbf{Synthesis via EL solver.}
We implemented an \LTLfp/\PPLTLp synthesis procedure \tool-EL based on a reduction to EL games, using the symbolic EL solution algorithm. For \LTLfp synthesis, we leverage the existing \LTLf-to-DFA translator \lydiasyft~\cite{lydiasyft} as a backend to translate \LTLf formula components to explicit-state deterministic finite automata~(DFAs). These DFAs are tailored according to the reduction in~\cite{LTLfplus}~(cf. Step 1. in Section~\ref{sec:ELsynthesis}) and transformed into symbolic representation using Binary Decision Diagrams~(BDDs)~\cite{bryant.92.cs,ZTLPV17}. For \PPLTLp synthesis, we implemented a direct symbolic DFA construction based on the method described in~\cite{BonassiGFFGS23icaps}, which also serves as the first direct translator from \PPLTLp to symbolic DFA. Since all DFAs are in symbolic representation, we can avoid explicitly computing their cross-product. Instead, we perform an on-the-fly product construction during the EL game solving, which is carried out using symbolic fixpoint computations as in~\cite{ZTLPV17,DBLP:conf/fossacs/HausmannLP24}.





\smallskip
\noindent\textbf{Synthesis via MP solver.}
We also implemented the MP  based approach to \LTLfp/\PPLTLp synthesis, described in Section~\ref{sec:synthesis}. 
Given an MP game $G =(V,E,(\Gamma,\gamma,\varphi))$, the core component in solving the game involves constructing a directed acyclic graph~(DAG) of EL games. These EL games are then solved in a bottom-up order along the DAG. 
Note that each EL game in the DAG is obtained by evaluating the local events in the MP condition $\varphi$. We encode $\varphi$ symbolically using BDDs for efficient evaluation and simplification.

\section{Experiments}\label{sec:bench}

All experiments were conducted on a laptop running 64-bit Ubuntu 22.04.4 LTS, with an i5-1245U CPU with 12 cores and 32GB of memory. Time-out was set to one hour.



\begin{figure}[t]
    \centering
    \scalebox{0.9}
    {\includegraphics[width=\linewidth]{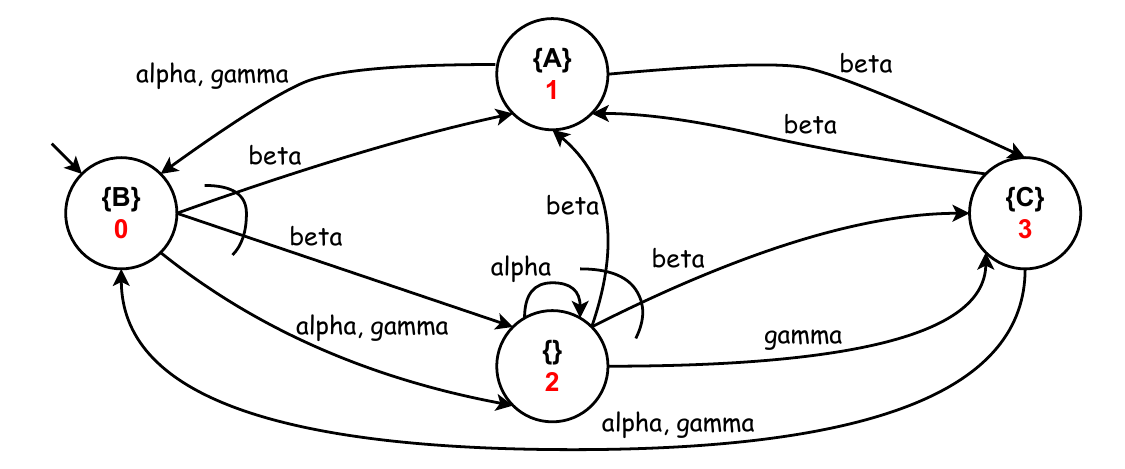}}
    \caption{A simple planning domain}
    \label{fig:domain}
\end{figure}

\smallskip
\noindent\textbf{Comparing \LTLfp and \LTLf.} 
We construct the benchmark example based on a simple planning problem in a nondeterministic domain, illustrated in Figure~\ref{fig:domain}. 
In this domain, a robot starts at state 0 and chooses an action from the set \{\emph{alpha, beta, gamma}\} to move to the next state. However, the actual next state also depends on the response of the environment, which is nondeterministic to the robot. For example, if the robot is in state 0 and takes action \emph{beta}, the robot may move to either state 1 or state 2. We can describe this domain in either \LTLf or \LTLfp. The core is that we just use the formula to specify safety conditions that represent the transitions of the domain following the robot action and the environment response. In this case, we have an \LTLf formula $\Phi_D$ caturing the traces of the domain, and an \LTLf formula $\Phi_{act}$ specifying that exactly one action is executed at each step. 
The objective of the robot is given by the \LTLf formula $\Phi_{goal} = \mathsf{F}(A \wedge \mathsf{F}(B \wedge \mathsf{X}\,\mathsf{false}))$, specifying that the robot must eventually visit $A$ and then visit $B$ at the end of the finite trace; recall that $\mathsf{X}$ stands for ``weak next''. 
Ultimately, we have the overall \LTLf specification describing the planning problem as: 
$\Phi = \Phi_{act} \wedge (\Phi_D \rightarrow \Phi_{goal}).$

To construct the corresponding \LTLfp specification for the same domain, we note that both $\Phi_{act}$ and $\Phi_D$ represent safety conditions. This allows us to simply add universal quantification $\forall$ in front of $\Phi_{act}$ and $\Phi_D$. For the objective, instead of simply requiring the robot to visit $A$ and then (or meanwhile) $B$ once, we make it more difficult by requiring that the robot should ``visit $A$ and subsequently (or simultaneously) visit $B$" infinitely often. This leads to the \LTLfp specification 
$\Psi = \forall \Phi_{act} \wedge (\forall \Phi_D \rightarrow \forall \exists \Phi_{goal}).$

We run both problems using LydiaSyft for \LTLf synthesis on $\Phi$ and \tool~(both using the MP solver and the EL solver) for \LTLfp synthesis on $\Psi$, and observe that both solvers complete within a few milliseconds. Recall that $\Psi$ encodes a more complex objective involving a recurrence formula. This confirms that there is no overhead in constructing the arena for \LTLfp respect to \LTLf, and that the more complex fixpoint computation required for \LTLfp does not compromise performance.

\begin{figure}[t]
    \centering
    \includegraphics[width=\linewidth]{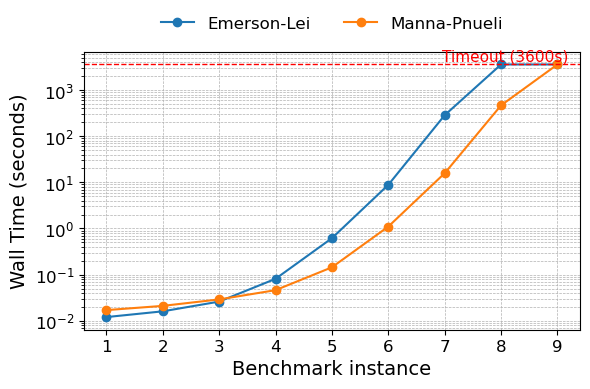}
    \caption{Runtime of \LTLfp synthesis, counter-games}
    \label{fig:ltlfp-pattern3-time}
\end{figure}

\smallskip
\noindent\textbf{Comparing MP solver and the EL solver.}
We next compare the scalability of the MP solver and the EL solver. 
%
The first benchmark  is inspired by the counter-game introduced in~\cite{ZhuGPV20}. We adapt the basic game setting which involves an $n$-bit binary counter. We use $\forall \Phi_B$ to describe the behaviour of the counter, including both the value transitions and the response of the environment to grant requests to increment the counter. The formula $\forall \exists \Phi_{add}$ represents agent requests to increase the counter value, while $\exists \Phi_g$ captures the goal condition that the counter eventually reaches its maximum value, i.e., all bits are set to~1. The \LTLfp formulas in this series then are defined as 
\begin{align*}
&\Psi = \forall (\Phi_{init} \wedge \Phi_{inc} \wedge \Phi_{B_i}) \rightarrow (\forall\exists \Phi_{add} \rightarrow \exists \Phi_g), \mbox{ where} \\
&\Phi_{init} = \neg c_0 \wedge \ldots \wedge \neg c_{n-1} \wedge \neg b_0 \wedge  \ldots \neg b_{n-1}\\
&\Phi_{inc} = \mathsf{G}(add \to (\mathsf{X}(c_0) \wedge \mathsf{X}\mathsf{X}(c_0) \wedge \mathsf{X}\mathsf{X}\mathsf{X}(c_0))\\
&\Phi_{B_i} = \begin{cases}
 ((\neg c_i \wedge \neg b_i) \rightarrow \mathsf{X}(\neg b_i \wedge \neg c_{i+1}))\,\land\\
 ((\neg c_i \wedge b_i) \rightarrow \mathsf{X}(b_i \wedge \neg c_{i+1}))\,\land\\
 ((c_i \wedge \neg b_i) \rightarrow \mathsf{X}(b_i \wedge \neg c_{i+1}))\,\land \\
 ((c_i \wedge b_i) \rightarrow \mathsf{X}(\neg b_i \wedge c_{i+1})) \end{cases} \\
& \Phi_{add} = \mathsf{F}(add \wedge \mathsf{X}(\mathsf{false}))\\
& \Phi_g = \mathsf{F}(b_0 \wedge \ldots \wedge b_{n-1} \wedge \mathsf{X}(\mathsf{false}))
\end{align*}

In this benchmark, all variables $b_i$ and $c_i$ are environment variables, since they are used to describe the status of the counter. The variable $add$ is an agent variable, used to request an increment of the counter. All the formulas in this benchmark  are realizable as well. Figure~\ref{fig:ltlfp-pattern3-time} shows that the MP solver significantly outperforms the EL solver, solving more instances with lower runtime.

\begin{figure}[t]
    \centering
    \includegraphics[width=\linewidth]{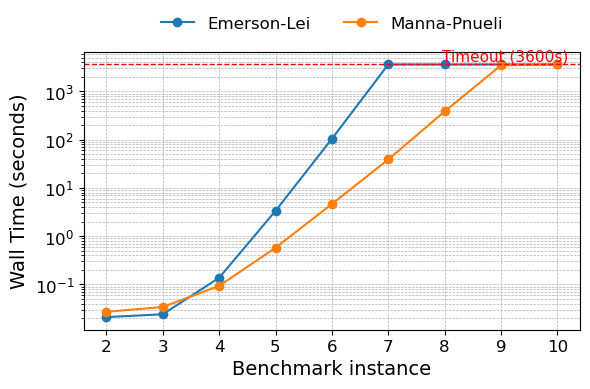}
    \caption{Runtime of \LTLfp synthesis, \emph{$(\forall\exists \rightarrow \exists)$-Pattern}}
    \label{fig:ltlfp-pattern2-time}
\end{figure}

\smallskip
To check scaling as the game conditions increase, we construct a benchmark consisting of formulas formed as conjunctions of subformulas in the form of $\forall\exists \rightarrow \exists$, referred to as \emph{$(\forall\exists \rightarrow \exists)$-Pattern}, and is defined as:
\begin{align*}
    \Psi_n &= \textstyle\bigwedge_{1 \leq i \leq n} (\,\forall \exists\, \mathsf{F}(e_i \land \mathsf{X}(\mathsf{false})) \rightarrow \exists\, \mathsf{F}(a_i \land \mathsf{X}(\mathsf{false}))\,)
\end{align*}

Each \LTLf formula component is very simple. But the number of recurrence and guarantee formulas increases with $n$.
Each $a_i$ is an agent variable, while each $e_i$ is an environment variable; all formulas in this series are realizable.

Figure~\ref{fig:ltlfp-pattern2-time} shows the running time on the $(\forall\exists \rightarrow \exists)$-Pattern benchmarks. The MP solver is able to solve more instances within the time limit, and consistently takes less time than the EL solver, demonstrating the superior performance of the MP solver.

\begin{figure}[t]
    \centering
    \includegraphics[width=\linewidth]{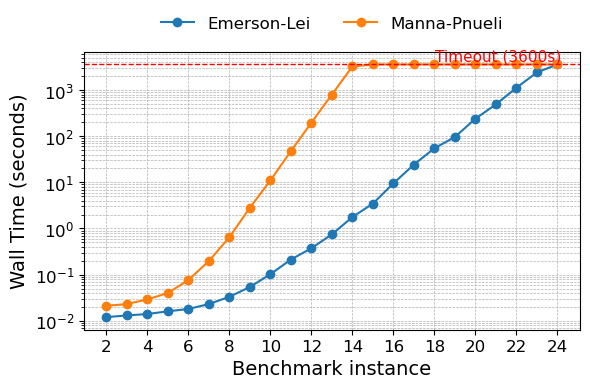}
    \caption{Runtime of \LTLfp synthesis, \emph{$\exists$-Pattern}}
    \label{fig:ltlfp-pattern7-time}
\end{figure}

\smallskip
A natural question is whether there are cases for which the EL solver performs better than the MP solver in practice.
We construct a specific benchmark, the instances of which contain only guarantee formulas~($\exists$). More specifically, the formulas, referred to as \emph{$\exists$-Pattern}, are defined as follows:
\begin{align*}
\Psi_n = \textstyle\bigwedge_{1 \leq i \leq n}\exists\, \mathsf{F} ((a_i \lor e_i) \land \mathsf{X}(\mathsf{false}))
\end{align*}

Since instances contain only local events, one might expect the MP solver to outperform the EL solver. However, the results in Figure~\ref{fig:ltlfp-pattern7-time} show that the EL solver is able to solve a substantially larger number of instances within the time limit and consistently takes less time than the MP solver on instances that both solvers can handle. To understand why this happens, one has to consider that while the EL solver computes a nested fixpoint, having to reduce $\exists(\cdot)$ to $\forall\exists(\cdot)$, the automata construction for $\exists(\cdot)$ introduces loops on the accepting states, thus shortcutting the corresponding nested fixpoint. On the other hand, the MP solver must first construct a DAG, where each node corresponds to a relatively~(but in this case not significantly) simpler EL game. Moreover, the MP solver can only conclude ``realizable'' when all these games have been solved, resulting in greater overall computational overhead. 








\section{Discussion}

We have implemented synthesizers for \LTLfp (and \PPLTLp) based on reductions to EL and MP games. We have introduced MP games and showed that theoretically 
solvers based on MP games have better computational characteristics than EL solvers.
In practice, experimental results indicate that while the MP solver often performs better than the EL solver, this is not always the case, since there is a trade-off between efficiency gains obtained by solving
DAGs of simpler EL games versus the higher cost associated to the construction of these DAGs.
This indicates that fine-tuning may be needed to strike the perfect balance between which local events should be treated
by the DAG construction and which should be left to the EL solver. We leave this for future work.

We also plan to explore a recently proposed alternative game construction that 
combines the benefits of the EL and the MP reductions, eliding the DAG construction
by integrating local events into the existing EL events, instead of adding auxiliary EL events \cite{Duret-Lutz25}.

We conclude by briefly mentioning \emph{obligation properties},
that is, Boolean combinations of guarantee, $\exists(\cdot)$, and safety, $\forall(\cdot)$ formulas.
Synthesis for these formulas reduces to the solution of MP games without EL events; asymptotically, the solution
of such games is not harder than the solution of reachability games, indicating that particularly good performance can be obtained for the corresponding fragments of \LTLfp (and \PPLTLp). Again, we leave this for future work.

\section*{Acknowledgments}
This work is supported in part by the ERC Advanced Grant WhiteMech (No. 834228), the PRIN project RIPER (No. 20203FFYLK), the PNRR MUR project FAIR (No. PE0000013), the UKRI Erlangen AI Hub on Mathematical and Computational Foundations of AI, the EPSRC project LaST (No. EP/Z003121/1), and by the Swedish research council (VR) through project No. 2020-04963. Gianmarco Parretti is supported by the Italian National Phd on AI at the University of Rome ``La Sapienza''.

\clearpage
\bibliographystyle{kr}
\bibliography{lib,gdg-references}


\clearpage
\section*{Supplement to ``Preliminaries''}

\paragraph{Syntax and Semantics of \LTL.}
Linear-time Temporal Logic (\LTL)~\cite{Pnueli77} allows to express temporal properties of infinite traces.
The set of \LTL formulas over the set $AP$ of atomic propositions is 
given by the following grammar.
\begin{align*}
\varphi,\psi::= p \mid \neg\varphi\mid \varphi\land\psi\mid \mathsf{X} \varphi \mid \varphi\mathsf{U}\psi \tag{$p\in AP$}
\end{align*}
We make use of common abbreviations such as $\varphi\lor\psi=\neg(\neg \varphi\land\neg\psi)$, $\mathsf{true}=p\lor \neg p$, $\mathsf{false}=\neg\mathsf{true}$, $\mathsf{F}\varphi=\mathsf{true}\mathsf{U}\varphi$ (``eventually'') and
$\mathsf{G}\varphi=\neg \mathsf{F}\neg\varphi$ (``always'').
\LTL formulas are evaluated over infinite traces $\tau\in (2^{AP})^\omega$ of sets of atomic propositions. 
Satisfaction of \LTL formulas by infinite traces is defined inductively as follows, referring to the $i$th element in $\tau$ by $\tau_i$.
\begin{align*}
\tau,i&\models p &\text{ iff } &p\in \tau_i\\
\tau,i&\models \neg\varphi &\text{ iff } &\tau,i\not\models\varphi\\
\tau,i&\models \varphi\land\psi &\text{ iff } &\tau,i\models \varphi \text{ and } \tau,i\models \psi\\
\tau,i&\models \mathsf{X}\varphi &\text{ iff } &\tau,i+1\models\varphi\\
\tau,i&\models \varphi\mathsf{U}\psi &\text{ iff } &\exists j\geq i.\,\tau,j\models \psi \text{ and } \\
&&&\forall i\leq j'<j.\tau,j'\models \varphi.
\end{align*}
Given an \LTL formula $\varphi$, we let 
\[
[\varphi]=\{\tau\in (2^{AP})^\omega\mid \tau,0\models\varphi\}
\]
denote
the set of infinite traces that satisfy $\varphi$ at the start.

\paragraph{Syntax and Semantics of \LTLf.}
The syntax of \LTL on finite traces (\LTLf)~\cite{DegVa13} is the same as the syntax of \LTL given above. However, \LTLf formulas are evaluated
over finite traces $\tau\in (2^{AP})^*$ rather than over infinite ones. A common abbreviation is the operator $\mathsf{X}\varphi=\neg \mathsf{X}[!]\neg\varphi$ (``weak next''),
expressing that if there is a next position in the trace, then it satisfies $\varphi$.
The satisfaction of \LTLf formulas by finite traces is defined inductively
as follows, where $|\tau|$ denotes the length of a finite trace $\tau$.
\begin{align*}
\tau,i&\models p &\text{ iff } &p\in \tau_i\\
\tau,i&\models \neg\varphi &\text{ iff } &\tau,i\not\models\varphi\\
\tau,i&\models \varphi\land\psi &\text{ iff } &\tau,i\models \varphi \text{ and } \tau,i\models \psi\\
\tau,i&\models \mathsf{X[!]}\varphi &\text{ iff } &i+1< |\tau|\text{ and }\tau,i+1\models\varphi\\
\tau,i&\models \varphi\mathsf{U}\psi &\text{ iff } &\exists i\leq j<|\tau|.\,\tau,j\models \psi \text{ and } \\
&&&\forall i\leq j'<j.\tau,j'\models \varphi.
\end{align*}
The subtle (but consequential) difference to standard \LTL semantics is the requirement that $i+1<|\tau|$ (resp. $j<|\tau|$) in the last two clauses;
that is, for all $\varphi$, $\mathsf{X[!]}\varphi$ is not satisfied at the end of a finite trace, and in
order for $\varphi\mathsf{U}\psi$ to be satisfied in a finite trace, $\psi$ is required to be satisfied before the trace ends.
The formula $\mathsf{last} = \neg \mathsf{X[!]}\,\mathsf{true}$ is satisfied exactly at the last position of a finite trace.

Given an \LTLf formula $\varphi$, we let 
\[
[\varphi]=\{\tau\in (2^{AP})^*\mid \tau,0\models\varphi\}
\]
denote the set of \emph{finite} traces that satisfy $\varphi$ at the start.

\paragraph{Syntax and Semantics of \PPLTL.}

Pure past LTL (\PPLTL)~\cite{DeGiacomoSFR20} allows to express temporal properties of finite traces
by means of statements that refer to the past.
The set of \PPLTL formulas over the set $AP$ of atomic propositions is 
given by the following grammar.
\begin{align*}
\varphi,\psi::= p \mid \neg\varphi\mid \varphi\land\psi\mid \mathsf{Y[!]} \varphi \mid \varphi\mathsf{S}\psi \tag{$p\in AP$}
\end{align*}
Here, $\mathsf{Y[!]}$ (``yesterday'') and $\mathsf{S}$ (``since'') are the past operators; they
are past analogues of the temporal operators $\mathsf{X[!]}$ and $\mathsf{U}$.
Common abbreviations include $\mathsf{O}\varphi=\mathsf{true}\,\mathsf{S}\,\varphi$ (``at least once in the past'') and
$\mathsf{H}\varphi=\neg \mathsf{O}\neg\varphi$ (``historically''). 
The satisfaction of \PPLTL formulas by finite traces is defined inductively
as follows.
\begin{align*}
\tau,i&\models p &\text{ iff } &p\in \tau_i\\
\tau,i&\models \neg\varphi &\text{ iff } &\tau,i\not\models\varphi\\
\tau,i&\models \varphi\land\psi &\text{ iff } &\tau,i\models \varphi \text{ and } \tau,i\models \psi\\
\tau,i&\models \mathsf{Y[!]}\varphi &\text{ iff } &i>0\text{ and }\tau,i-1\models\varphi\\
\tau,i&\models \varphi\,\mathsf{S}\,\psi &\text{ iff } &\exists i\geq j\geq 0.\,\tau,j\models \psi \text{ and } \\
&&&\forall i\geq j'>j.\tau,j'\models \varphi.
\end{align*}
We observe that for all $\varphi$, $\mathsf{Y[!]}\varphi$ is not satisfied at the start of a finite trace, and in
order for $\varphi\,\mathsf{S}\,\psi$ to be satisfied at position $i$ in a finite trace, $\psi$ has to be satisfied somewhere in the first $i+1$ positions
of the trace.
The formula $\mathsf{first} = \mathsf{Y}\,\mathsf{false}$ (where $\mathsf{Y}=\neg \mathsf{Y}\mathsf[!]\neg$ denotes~``weak yesterday") is satisfied exactly at the first position of a finite trace.

Given a \PPLTL formula $\varphi$, we let 
\[
[\varphi]=\{\tau\in (2^{AP})^*\mid \tau,|\tau|-1\models\varphi\}
\]
denote the set of \emph{finite} traces that satisfy $\varphi$ at the \emph{end}, that is, at position $|\tau|-1$.

\subsection*{Supplement to ``Synthesis via EL Games''}

\textbf{Additional details on Zielonka trees.}

We recall the definition of Zielonka trees from~\cite{DBLP:conf/lics/DziembowskiJW97,DBLP:conf/fossacs/HausmannLP24}.
	
    The \emph{Zielonka tree} for an Emerson-Lei formula $\varphi$ over
	set $\Gamma$ of events is a tuple
	$\mathcal{Z}_\varphi=(T,R,l)$ where $(T,R\subseteq T\times T)$ is a tree and
	$l:T\to 2^\Gamma$ is a labeling function that
	assigns sets $l(t)$ of events to vertices $t\in T$.
	We denote the root of $(T,R)$ by $r$.
    Then $\mathcal{Z}_\varphi$ is defined to be
	the unique tree (up to reordering of child vertices) that satisfies
	the following constraints.
	\begin{itemize}
		\item The root vertex is labeled with $\Gamma$, that is, $l(r)=\Gamma$.
		\item Each vertex $t$ has exactly one child vertex $t_\Delta$
		(labeled with $l(t_\Delta)=\Delta$) for
		each set $\Delta$ of events that is maximal in
$\{\Delta'\subsetneq l(t)\mid \Delta'\models\varphi \Leftrightarrow l(t)\not\models \varphi\}$.
	\end{itemize}
    A vertex $t\in T$ is \emph{winning} if $l(t)\models\varphi$, and \emph{losing} otherwise.

\begin{example}\label{example:zielonkaTree}
As an example of a Zielonka tree that has branching at both winning and losing vertices,
consider the tree for the EL objective
$\varphi_{EL}=(\mathsf{FG}~\neg a\vee
\mathsf{GF}~b)\wedge(\mathsf{FG}~\neg a\vee
\mathsf{FG}~\neg d)\land
\mathsf{GF}~c$.
This property can be seen as the conjunction of a
Streett pair $(a,b)$ with two disjunctive Rabin pairs $(c,a)$ and
$(c,d)$, altogether stating that $c$ occurs infinitely often and $a$ occurs
finitely often or $b$ occurs infinitely often and $d$ occurs finitely
often.
We depict the induced Zielonka tree, with circles depicting losing vertices and boxes depicting winning vertices.
\begin{center}
\tikzset{every state/.style={minimum size=15pt}}
\begin{tiny}
  \begin{tikzpicture}[
		auto,
    node distance=0.8cm,
    semithick
    ]
     \node[state, label={right: $a,b,c,d$}] (0) {$1$};
     \node (yo) [below of=0] {};
     \node (yo2) [right of=yo] {};
     \node[state, rectangle, label={left: $a,b,c$}] (1) [left of=yo] {$2$};
     \node[state, rectangle, label={left: $b,c,d$}] (2) [right of=yo2]
     {$3$};
     \node (yo1) [below of=1] {};
     \node[state, label={right: $a,c$}] (3) [right of=yo1] {$5$};
     \node[state, label={right: $a,b$}] (4) [left of=yo1] {$4$};
     \node[state, label={below: $b,d$}] (5) [below of=2] {$6$};
     \node[state, rectangle, label={right: $c$}] (6) [below of=3] {$7$};
     \node[state, label={right: $\emptyset$}] (7) [below of=6] {$8$};
     \path[->] (0) edge node [pos=0.3,left] {} (1);
     \path[->] (0) edge node [pos=0.3,right] {} (2);
     \path[->] (1) edge node [pos=0.3,left] {} (3);
     \path[->] (1) edge node [pos=0.3,left] {} (4);
     \path[->] (2) edge node [pos=0.3,left] {} (5);
     \path[->] (3) edge node [pos=0.3,left] {} (6);
     \path[->] (6) edge node [pos=0.3,left] {} (7);

  \end{tikzpicture}
\end{tiny}
\end{center}
\end{example}
The solution algorithm in for EL games~\cite{DBLP:conf/fossacs/HausmannLP24}
extracts, from an input objective formula $\varphi$, a $\mu$-calculus formula (equivalently: a system of fixpoint equations)
from the Zielonka tree for $\varphi$. In this formula, winning vertices are encoded by greatest fixpoints, and losing vertices are encoded by least fixpoints.
The nesting depth of alternating fixpoints in the resulting $\mu$-calculus formula is exactly the height of the input Zielonka tree.
Then an Emerson-Lei game can be solved symbolically by symbolic evaluation of the $\mu$-calculus formula.\medskip
\begin{example}~\cite{DBLP:conf/fossacs/HausmannLP24}\label{example:eqsys}
The fixpoint equation system associated to the Zielonka tree from Example~\ref{example:zielonkaTree} is as follows:
\begin{scriptsize}
    \begin{align*}
X_{1}&=_{\mathsf{LFP}} X_{2}\cup X_{3} \qquad X_{2} =_{\mathsf{GFP}} X_{4}\cap X_{5}\qquad
X_{3}=_{\mathsf{GFP}} X_{6}\\
X_{5} &=_{\mathsf{LFP}} X_{7}\qquad \qquad\,\,\,\,
X_{7} =_{\mathsf{GFP}} X_{8}\\
X_{4} &=_{\mathsf{LFP}} (\neg c\land \neg d
\cap\mathsf{Cpre}(X_4))\cup(c\land \neg
d\cap\mathsf{Cpre}(X_{2}))\cup(d\cap\mathsf{Cpre}(X_{1}))\\
X_{6} &=_{\mathsf{LFP}} (\neg a\land\neg c\cap\mathsf{Cpre}(X_6))\cup
(\neg a\land c\cap\mathsf{Cpre}(X_3))\cup(a\cap\mathsf{Cpre}(X_{1}))\\
X_{8} &=_{\mathsf{LFP}}
(\neg a\land\neg b \land \neg c\land \neg
d\cap\mathsf{Cpre}(X_{8}))\,\cup\\
&\quad\quad\,\,\,\,(\neg a\land \neg b \land
c\land \neg d \cap\mathsf{Cpre}(X_{7}))\,\cup\\
&\quad\quad\,\,\,\,(a \land \neg b \land \neg d
\cap\mathsf{Cpre}(X_{5}))\cup(b \land
\neg d\cap\mathsf{Cpre}(X_{2}))\cup\\
&\qquad\,\,\,\,(d\cap\mathsf{Cpre}(X_{1}))
\end{align*}
\end{scriptsize}
\noindent Here, $\mathsf{Cpre}$ denotes the \emph{controllable predecessor} function; for a set $X\subseteq V$ of game nodes, $\mathsf{Cpre}(X)$ 
thus is the set of game nodes $v\in V$ such that the system player can enforce that $X$ is reached from $v$ by playing one step of the game.
This fixpoint equation system characterizes the winning region in EL games with objective formula $\varphi_{EL}$
(see~\cite{DBLP:conf/fossacs/HausmannLP24} for more details).
\end{example}

\noindent\textbf{Full proof of Lemma~\ref{lem:localMemory}}
\begin{proof}
We consider the case that $\mathbb{Q}_i=\exists$; the proof for the case that $\mathbb{Q}_i=\forall$ is analogous.
Let $\pi$ be an infinite run of $(D_i,F_i)$ on some infinite word $w\in (2^{AP})^\omega$.
By construction, every accepting state $q\in F_i$ is an accepting sink state (with $\delta_i(q,a)=q$ for all $a\in\Sigma$).
Thus $\pi$ visits some state in $F_i$ if and only if $\pi$ from some point on visits only states from $F_i$.
\end{proof}

\subsection*{Supplement to ``Solution of MP Games''}

\textbf{Full proof of Lemma~\ref{lemma:MPtoELcorrectness}}
\begin{proof}
We point out that plays $\pi'$ in $G_1$ have corresponding plays $\pi$ in $G$, obtained from $\pi'$ by removing
the auxiliary memory components. Conversely, plays $\pi$ in $G$ induce plays in $G_1$ that are obtained by annotating
$\pi$ with the auxiliary memory.

For one direction, let $\sigma$ be a winning strategy for the system player in $G$. Define a strategy $\sigma'$ for
the system player in $G_1$ by putting $\sigma'(\pi')=(\sigma(v),\mathsf{upd}{v,L})$ for any play
$\pi'$ in $G_1$ that ends in $(v,L)$ such that $v\in V_s$ and such that the associated play $\pi$ in $G$ is compatible
with $\sigma$. By definition of the memory updating function, $\mathsf{F}$-events stay in the auxiliary memory
after they have been visited once; similarly, $\mathsf{G}$-events are permanently removed from the memory
upon being violated once. Hence we have
\begin{align*}
\pi&\models\mathsf{F}\, c &\Leftrightarrow \pi'&\models\mathsf{F}\, c&\Leftrightarrow \pi'&\models\mathsf{GF}\, c &\Leftrightarrow \pi'&\models\mathsf{FG}\, c\\
\pi&\models\mathsf{G}\, c &\Leftrightarrow \pi'&\models\mathsf{G}\, c&\Leftrightarrow \pi'&\models\mathsf{GF}\, c &\Leftrightarrow \pi'&\models\mathsf{FG}\, c.
\end{align*}
It follows that for all plays $\pi'$ that are compatible with $\sigma'$, we have
$\pi\models\varphi$ if and only if $\pi'\models\varphi_1$, as required.

For the converse direction, let $\sigma'$ be a winning strategy for the system player in $G_1$. Define a strategy $\sigma$ for
the system player in $G$ that simply ignores the auxiliary memory values. We formally define this strategy as follows. Given a play 
$\pi$ in $G$ that ends in $v\in V_s$, assume that the induced play $\pi'$ is compatible with $\sigma'$, and assume that $\pi'$ ends in $(v,L)$.
Then put $\sigma(\pi)=w$, where $\sigma'(v,L)=(w,L')$.
Then $\sigma$ is a winning strategy for the same argument as in the proof of the previous direction.
Specifically, we have that for all plays $\pi$ that are compatible with $\sigma$, we have
$\pi\models\varphi$ if and only if $\pi'\models\varphi_1$.

\end{proof}

\noindent\textbf{Full proof of Lemma~\ref{lem:restr}}
\begin{proof}
Let $\pi=q_0 q_1\ldots\in\mathsf{plays}(A)$. 
The proof is by induction on $\varphi$. We consider just the nontrivial base cases
$\varphi=\mathsf{F}\,c$ and $\varphi=\mathsf{G}\,c$.
In both cases we have $\pi\models \varphi$ if and only if
$c\in \mathsf{ev}_\mathsf{F,G}(\pi)$ if and only if
$\varphi_{\mathsf{ev}_\mathsf{F,G}(\pi)}=\top$
if and only if 
$\pi\models\varphi_{\mathsf{ev}_\mathsf{F,G}(\pi)}$.
\end{proof}

\noindent\textbf{Full proof of Lemma~\ref{lemma:MP-to-EL-correctness}}
\begin{proof}
	For one direction,
	let $\sigma:V^*\cdot V_s\to V$ be a strategy for the system player
	in $G$ with which they win every play that starts at $v$.
	Define the system player strategy $\sigma':V'^*\cdot V'_s\to V'$ in $G_2$ as follows.
	Let $\pi'\in V'^*\cdot V'_s$ be a finite play in $G_2$ that starts at $(v,\mathsf{ev}_\mathsf{G})$ and ends in $(w,L)\in V'_s$.
	Then $\pi'$ induces the finite play $\pi$ in $G$ that is obtained from $\pi'$ by removing the memory values; we point out that $\pi$ starts
	in $v$ and ends in $w$.
	If $\pi$ is compatible with $\sigma$, then
	put $\sigma'(\pi')=(\sigma(\pi),\mathsf{upd}(L,w))$.
	To see that $\sigma'$ is a winning strategy,
	consider an infinite play $\pi'$ of $G_2$ that starts at $(v,\mathsf{ev}_{\mathsf{G}})$
	and that is compatible with $\sigma'$.
        We have to show $\pi'\models \varphi_L$ where
        $L=\mathsf{ev}_\mathsf{F,G}(\pi)$ denotes
	the set of $\mathsf{F}$- and $\mathsf{G}$-events satisfied
	by $\pi$.
        Let $\pi$ be the play in $G$ that is obtained from $\pi'$ by removing the auxiliary
        memory values.
        Since $\pi$ starts at $v$ and is compatible with $\sigma$
        by construction, the system player wins $\pi$ in $G$, that is, $\pi\models\varphi$.
        By Lemma~\ref{lem:restr}, we also have $\pi\models \varphi_{L}$.
	We observe that by definition of the memory updating function, $\pi'$ eventually reaches
	the subgame $G_L$ and then stays in this subgame forever. Hence $\pi'\models \mathsf{Inf}\,L$.
	We point out that $\gamma'$ outputs the local events from the memory component $L$ of nodes $(v,L)$ in $G'$ together with the Emerson-Lei events of the corresponding node $v$ in $G$      according to $\gamma$. Hence $\pi\models \varphi_{L}$ implies that $\pi'\models \varphi_L$.

	For the converse direction, let $\sigma':V'^*\cdot V'_s\to V'$ 
	be a strategy for the system player in $G_2$ such that they win
	every play that starts at $(v,\mathsf{ev}_{\mathsf{G}})$ and is 
	compatible with $\sigma'$. Define a strategy
	$\sigma:V^*\cdot V_s\to V$ for the system player in $G$ as follows.
	Let $\pi\in V^*\cdot V_s$ be a finite play that starts
	at $v$ and ends in some game node $w\in V_s$.
	Then $\pi$ induces a single play $\pi'$ in $G_2$ that starts
	at $(v,\mathsf{ev}_{\mathsf{G}})$ and ends in 
	$(w,L)$ for some set $L$ of events. If $\pi'$ is compatible
	with $\sigma'$, then put $\sigma(\pi)=w'$ where
	$\sigma'(\pi')=(w',L')$.
	To see that $\sigma$ is a winning strategy, let $\pi$ be
	an infinite play that is compatible with $\sigma$.
	By construction, $\pi$ induces an infinite play $\pi'$ of $G_2$
	that is compatible with $\sigma'$. Put $L=\mathsf{ev}_{\mathsf{F},\mathsf{G}}(\pi)$. 
	The play $\pi'$ eventually reaches the subgame $G_L$ and stays
	in this subgame forever. As the system
	player wins $\pi'$, we have $\pi'\models\mathsf{Inf}\,L\land \varphi_L$.
	By Lemma~\ref{lem:restr}, $\pi'\models\varphi$ which implies
	that $\pi\models\varphi$, as required.
\end{proof}

\paragraph{Obligation Games}

Consider the special case of MP games in which the objective of the game
is a Boolean
combination of $\mathsf{F}$- and $\mathsf{G}$-atoms.
This is the case for MP objectives without EL events
(that is, with $\Gamma_{\mathsf{EL}}=\emptyset$).
Applying to such games the reduction to compositional EL games described in Section~\ref{sec:ELreduction2}
yields DAGs of safety and reachability games.
In particular, in such games we have that for all $L\subseteq \Gamma_{\mathsf{F},\mathsf{G}}$, either $\varphi_L=\top$ or
$\varphi_L=\bot$. In the former case, the subgame
$G_L$ in the reduced DAG of EL games is a
safety game, in the latter case it is a reachability game.

Then the reduction desribed in Section~\ref{sec:ELreduction2} constructs DAGs of safety and reachability games and the upper bounds improve
as follows.
\begin{corollary}
	MP games without EL events, with $d$ local events and $m$ edges can be solved in time $\mathcal{O}(2^d \cdot m)$;
    winning strategies require at most $2^d$ memory values.
\end{corollary}

For MP games that have memory for local events, the time bound in the above Corollary improves to $\mathcal{O}(m)$ (the same as
for standard reachability or safety games); in this case, winning strategies are positional (that is, do not require memory).

If $\varphi_{\Gamma_\mathsf{F}}=\bot$, then the
bottom subgame $G_{\Gamma_\mathsf{F}}$ in the DAG is a reachability
game without escape, and hence the winning region of the system player
in this subgame is $\emptyset$.
In fact, the system player loses any game node in the reduced game
from which only nodes belonging to reachability subgames can be reached.
If $\varphi_{\Gamma_\mathsf{F}}=\top$, then the
bottom subgame $G_{\Gamma_\mathsf{F}}$ is a safety game without a way to violate the safety objective; then the system player
wins every node in this subgame. Again, the system player wins any game node in the reduced game
from which only nodes belonging to safety subgames can be reached.

\subsection*{Supplement to ``Synthesis via MP Games''}
\textbf{Full proof of Lemma~\ref{lem:DMPAconstruction}}
\begin{proof}
\begin{itemize}
\item[--] If $\mathbb{Q}_i=\exists$, then $\pi\models \mathbb{Q}_i\Phi_i$ iff there is a finite
prefix of $\pi$ that satisfies $\Phi_i$ which is the case iff there is a finite
prefix $\pi'$ of $\pi$ such that the run of the DFA $(D_i,F_i)$
on $\pi'$ ends in a state from $F_i$. This in turn is the case iff $\tau\models\varphi_i$ where
$\tau$ is the infinite run of $(D_i,O_i)$ on $\pi$.
\item[--] If $\mathbb{Q}_i=\forall$, then $\pi\models \mathbb{Q}_i\Phi_i$ iff all finite
prefixes of $\pi$ satisfy $\Phi_i$ which is the case iff for each finite 
prefix $\pi'$ of $\pi$, the run of the DFA $(D_i,F_i)$
on $\pi'$ ends in a state from $F_i$. This in turn is the case iff $\tau\models\varphi_i$ where
$\tau$ is the infinite run of $(D_i,O_i)$ on $\pi$.
\item[--] If $\mathbb{Q}_i=\forall\exists$, then $\pi\models \mathbb{Q}_i\Phi_i$ iff there are infinitely many finite
prefixes of $\pi$ satisfy $\Phi_i$ which is the case iff there are infinitely many finite 
prefixes $\pi'$ of $\pi$ such that the run of the DFA $(D_i,F_i)$
on $\pi'$ ends in a state from $F_i$. This in turn is the case iff $\tau\models\varphi_i$ where
$\tau$ is the infinite run of $(D_i,O_i)$ on $\pi$.
\item[--] If $\mathbb{Q}_i=\exists\forall$, then $\pi\models \mathbb{Q}_i\Phi_i$ iff all but finitely many finite
prefixes of $\pi$ satisfy $\Phi_i$ which is the case iff for all but finitely many finite 
prefixes $\pi'$ of $\pi$, the run of the DFA $(D_i,F_i)$
on $\pi'$ ends in a state from $F_i$. This in turn is the case iff $\tau\models\varphi_i$ where
$\tau$ is the infinite run of $(D_i,O_i)$ on $\pi$.
\end{itemize}
\end{proof}

\noindent\textbf{Full proof of Proposition~\ref{prop:formulaToDMPA}}
\begin{proof}
We point out that MP objectives natively encode intersection and disjunction of languages as follows:
Given two disjoint deterministic transition systems $D_1, D_2$ and two MP objectives $O_1=(\mathsf{ev}_1,\gamma_1,\varphi_1)$ and
$O_2=(\mathsf{ev}_2,\gamma_2,\varphi_2)$ such that $\mathsf{ev}_1\cap \mathsf{ev}_2=\emptyset$, 
put $\mathsf{ev}=\mathsf{ev}_1\cup\mathsf{ev}_2$ and define
$\gamma:D_1\times D_2\to 2^\mathsf{ev}$ by $\gamma(q_1,q_2)=\gamma_1(q_1)\cup\gamma_2(q_2)$. Then
\begin{itemize}
\item $L(D_1,O_1)\cap L(D_2,O_2) = L(D_1\times D_2,(\mathsf{ev},\gamma,\varphi_1\land\varphi_2)$;
\item $L(D_1,O_1)\cup L(D_2,O_2) = L(D_1\times D_2,(\mathsf{ev},\gamma,\varphi_1\lor\varphi_2)$.
\end{itemize}
Then the claim follows by induction from the correctness of the individual finite trace automata 
(Lemma~\ref{lem:DMPAconstruction}).
\end{proof}

\end{document}